\theoremstyle{plain}
\def\pb#1{\save[]+<24 pt,0 pt>:a(#1)\ar@{pb{}}[]\restore}
\def\pbb#1{\save[]+<18 pt,0 pt>:a(#1)\ar@{pbb{}}[]\restore}
\def\pbbb#1{\save[]+<36 pt,0 pt>:a(#1)\ar@{pbbb{}}[]\restore}
\newenvironment{proofof}[1]{\trivlist\item[]{\em Proof of #1\/}:}%
{\unskip\nobreak\hskip 2em plus 1fil\nobreak$\Box$
\parfillskip=0pt%
\endtrivlist}
\newcommand{\hide}[1]{}
\newcommand{\secref}[1]{Section~\ref{#1}}
\newcommand{\lemmaref}[1]{Lemma~\ref{#1}}
\newcommand{\factref}[1]{Fact~\ref{#1}}
\newcommand{\propref}[1]{Propn.~\ref{#1}}
\newcommand{\thmref}[1]{Theorem~\ref{#1}}
\newcommand{\dgmref}[1]{diagram~\eqref{#1}}
\newcommand{\itemref}[1]{item~(\ref{#1})}
\newcommand{\Itemref}[1]{Item~(\ref{#1})}
\newcommand{\ie}{{i.e.}}
\newcommand{\etc}{{etc.}}
\newcommand{\viz}{{viz.}}
\newcommand{\textcitesec}{Sec.}
\newcommand{\mathprefix}[2]{$#2$\nobreakdash-\mbox{}#1}
\newcommand{\picalculus}{\mathprefix{calculus}\pi}
\newcommand{\gcoalgtext}{\mathprefix{coalgebra}}
\newcommand{\AMbisimtext}{AM-bisimulation}
\newcommand{\AMprecong}{AM-pre\-con\-gru\-ence}
\newcommand{\HJbisimtext}{HJ-bisimulation}
\newcommand{\Set}{{\mathbf{Set}}}
\newcommand{\Icat}{{\mathbb{I}}}
\newcommand{\Fnecat}{{\mathbb{F}^+}}
\newcommand{\Ccat}{{\mathbb{C}}}
\newcommand{\objC}{C}
\newcommand{\PshfI}{[\Icat,\Set]}
\newcommand{\PshfFne}{[\Fnecat,\Set]}
\newcommand{\PshfC}{[\Ccat,\Set]}
\newcommand{\ShfI}{\mathrm{Sh}_{\neg\neg}(\Icat)}
\newcommand{\inv}{^{\mathrm{-1}}}
\newcommand{\SetPow}{\mathcal{P}}
\newcommand{\DistF}{\mathcal{D}_\mathrm{f}}
\newcommand{\SetPowF}{\mathcal{P}_\mathrm{f}}
\newcommand{\SetPowPF}{\mathcal{P}_{{\mathrm{pf}}}}
\newcommand{\CCSLabels}{L}
\newcommand{\CCSLab}{l}
\newcommand{\setofcomp}[2]{\left\{#1\left|#2\right.\right\}}
\newcommand{\Forall}[2]{\forall #1.\,#2}
\newcommand{\Exists}[2]{\exists #1.\,#2}
\newcommand{\MilnerF}{\mathcal F}
\newcommand{\eqnstop}{\textrm{.}}
\newcommand{\compose}{\circ}
\newcommand{\gproofnotestext}{Notes}
\theoremstyle{definition}
\newcommand{\gindexed}{\mathprefix{indexed}}
\newcommand{\galgtext}{\mathprefix{algebra}}
\newcommand{\gcochain}{\mathprefix{cochain}}
\newcommand{\gchain}{\mathprefix{chain}}
\newcommand{\gpresentable}{\mathprefix{presentable}}
\newcommand{\card}{\mathsf{card}}
\newcommand{\CatA}{\mathcal{C}}
\newcommand{\ArbBehA}{B}
\newcommand{\BehA}{B}
\newcommand{\BehAa}{B'}
\newcommand{\LiftedBehA}{\bar\BehA}
\newcommand{\ArbOrdA}{\alpha}
\newcommand{\iso}{\cong}
\newcommand{\SmallS}{{\mathcal S}}
\newcommand{\srelation}{\mathprefix{relation}}
\newcommand{\spowerset}{\mathprefix{powerset}}
\newcommand{\SPow}[1]{P_{#1}}
\newcommand{\ArbCarrierA}{X}
\newcommand{\ArbCarrierB}{Y}
\newcommand{\ArbCarrierC}{Z}
\newcommand{\CoalgA}{h}
\newcommand{\CoalgB}{k}
\newcommand{\CoalgC}{z}
\newcommand{\RelSeqObjA}[1]{R_{#1}}
\newcommand{\RelSeqHJObjA}[1]{R^{\mathrm{HJ}}_{#1}}
\newcommand{\RelSeqAMObjA}[1]{R^{\mathrm{AM}}_{#1}}
\newcommand{\RelSeqMapA}[2]{r_{#1,#2}}
\newcommand{\ArbOrdB}{\beta}
\newcommand{\ArbLimOrdA}{\lambda}
\newcommand{\SetA}{X}
\newcommand{\SetB}{Y}
\newcommand{\SetZ}{Z}
  \newcommand{\ArbMorphA}{f}
\newcommand{\RelCat}{\mathbf{Rel}_\CatA}
\newcommand{\RPOrder}{\RelCat(\ArbCarrierA,\ArbCarrierB)}
\newcommand{\BehRPOrder}{\RelCat(\BehA\ArbCarrierA,\BehA\ArbCarrierB)}
\newcommand{\pleq}
{\leq}
\newcommand{\pgeq}
{\geq}
\newcommand{\RelA}{R}
\newcommand{\RelB}{S}
\newcommand{\RelAa}{R'}
\newcommand{\relA}{r}
\newcommand{\monoto}{\rightarrowtail}
\newcommand{\coverto}{\rightarrowtriangle}
\newcommand{\RelOP}{\Phi^{\mathrm{HJ}}}
\newcommand{\RelCOP}{\Phi^{\mathrm{AM}}}
\newcommand{\poinA}{i}
\newcommand{\poinB}{j}
\newcommand{\pbackcorner}
{\ar@{}[r]_(.1){\txt{\mbox{}\vspace{-12pt}\\\hspace{9pt}\LARGE{$\lrcorner$}}}}
\def\doi{7 (1:13) 2011}
\begin{document}

\title[Relating coalgebraic notions of bisimulation]{Relating coalgebraic notions of bisimulation\rsuper*}

\author[S.~Staton]{Sam Staton}	
\address{Laboratoire PPS, Universit\'e Paris 7}	
\email{sam.staton@cl.cam.ac.uk}  
\thanks{Research supported by EPSRC grants GR/T22049/01 and EP/E042414/1
and ANR Project CHOCO}	


\titlecomment{{\lsuper*}The material in this article is summarized in Sections 1--5 of
  my extended abstract in CALCO'09 \cite{staton-calco09}.}


\begin{abstract}
  The theory of coalgebras, for an endofunctor on a category, has been
  proposed as a general theory of transition systems.  We investigate
  and relate four generalizations of bisimulation to this setting,
  providing conditions under which the four different generalizations
  coincide.
  We study transfinite sequences whose limits are the greatest
  bisimulations.
\end{abstract}

\keywords{Bisimulation, Coinduction, Category theory, Coalgebra}
\subjclass{F.3.2, G.2.m}

\maketitle


\section*{Introduction}

\noindent Notions of bisimulation
play a central role in the theory of transition systems.
The theory of coalgebras provides a setting
in which different notions of system can be
understood at a general level. 
In this article I investigate notions of bisimulation at this general 
level.

To explain the generalization from transition systems to coalgebras,
we begin with the traditional presentation of a 
labelled transition system,
\[
\left(\ArbCarrierA,\ \text{\small (}\!\!\rightarrow_\ArbCarrierA\!\!\text{\small )}\,\subseteq\ArbCarrierA\times\CCSLabels\times\ArbCarrierA\right)\]
(for some set $\CCSLabels$ of labels).
A labelled transition system
can be considered `coalgebraically' as a set $\ArbCarrierA$ 
of states equipped with a 
function 
${\ArbCarrierA\to\SetPow(\CCSLabels\times\ArbCarrierA)}$,
into the powerset of $(\CCSLabels\times\ArbCarrierA)$,
assigning to each state $x\in\ArbCarrierA$ the set
$\{(l,x')~|~x\xrightarrow l_\ArbCarrierA x'\}$.
Generalizing, 
we are led to consider an arbitrary 
category~$\CatA$
and an endofunctor $\ArbBehA$ on it;
then a coalgebra is an object~%
$\ArbCarrierA\in\CatA$ of `states',
and a `next-state' morphism~%
${\ArbCarrierA\to\BehA(\ArbCarrierA)}$.

\subsection*{Coalgebras in different categories.}
Coalgebras appear as generalized transition systems in various settings.
For instance: 
transition systems
for name and value passing process calculi
have been studied in terms of 
coalgebras in
categories of presheaves~(e.g.~\cite{ft-namepassing,gyv-open,staton-thesis});
probabilistic transition systems have been modelled by 
coalgebras for a probability-distribution monad~(e.g.~\cite{bsv-probtypes,vr-bisimprob});
descriptive frames and concepts from modal logic
have been studied in terms of coalgebras over Stone 
spaces~(e.g.~\cite{a-cooksnonwellfounded,bfv-vietoris,kkv-stonecoalg});
basic process calculi with recursion have been modelled
using coalgebras over categories of domains~\cite{k-recbialg,p-bialgrec};
and stochastic transition systems
have been studied in terms of 
coalgebras 
over metric and measurable 
spaces~(see e.g.~\cite{bhmw-pseudometric,ddlp-bisimcocong,v-finalmeas,vr-bisimprob}). 
Finally, there are questions about the conventional theory of 
labelled transition systems in a more constructive universe of 
sets~(e.g.~\cite{bm-nonwellfound}).

\subsection*{Notions of bisimulation.}
Once coalgebras are understood as generalized transition systems,
we can consider bisimulation relations for these systems.
Recall that, for labelled transition systems
$(\SetA,\rightarrow_\SetA)$ and $(\SetB,\rightarrow_\SetB)$, a relation 
$\RelA\subseteq\SetA\times\SetB$ is a \emph{bisimulation}
if, whenever $x\,\RelA\,y$, then for all $l\in L$:
\begin{enumerate}[$\bullet$]
\item For $x'\in X$, if $x\xrightarrow l_X x'$ then there is $y'\in Y$ such that
$y\xrightarrow l_Y y'$ and $x'\,\RelA\,y'$;
\item For $y' \in Y$, if $y\xrightarrow l_Y y'$ then there is $x'\in X$ such that
$x\xrightarrow l_X x'$ and $x'\,\RelA\,y'$.
\end{enumerate}
How should the notion of bisimulation be generalized to the case
of coalgebras for endofunctors on arbitrary categories?
In this article, 
we identify four notions of bisimulation that 
have been proposed
in the coalgebraic context.
\begin{enumerate}[(1)]
\item A relation
  over which a suitably compatible
  transition structure can be defined, as proposed by 
  Aczel and Mendler~\cite{am-finalcoalg};
\item A relation that is compatible for 
  a suitable `relation-lifting' of the endofunctor, as proposed
  by Hermida and Jacobs~\cite{hj-indcoindfib};
\item A relation satisfying a `congruence' condition, proposed by
  Aczel and Mendler~\cite{am-finalcoalg} and used to obtain their 
  general final
  coalgebra theorem;
\item A relation which is the kernel of 
  a common compatible refinement of the two systems.
\end{enumerate}
The four notions coincide for the particular case of 
labelled transition systems.
Under certain 
conditions, the notions are related in the more general 
setting of coalgebras. 

\subsection*{Relationship with the terminal sequence.}
Various authors have constructed terminal coalgebras as a limit of a transfinite 
sequence; the initial part of the sequence is:
\[
1\xleftarrow{!} B(1)\xleftarrow{B(!)} B(B(1))\xleftarrow{B(B(!))}B(B(B(1)))\leftarrow\ \dotsi\ \leftarrow\ \dotsi\ \quad\text.
\]
Of the notions of bisimulation mentioned above, notions~(2) and~(3) 
can often be characterized as post-fixed points of a monotone operator~$\Phi$ on a 
lattice of relations. In this setting, by Tarski's fixed point theorem, 
there is a maximum 
bisimulation (`bisimilarity'). It is given explicitly
as a limit of a transfinite sequence; the initial part of the sequence is:
\[
X\times Y\ \supseteq\ \Phi(X \times Y)\ \supseteq\ \Phi(\Phi(X\times Y))\ \supseteq\ \Phi(\Phi(\Phi(X\times Y)))\ \supseteq\dots\supseteq\dotsi
\]
starting with the maximal relation, that relates everything.  
Under certain conditions, the steps of the terminal coalgebra sequence 
are precisely related with the steps of this relation refinement sequence.

\subsection*{Other approaches not considered.}
In this article we are concerned with internal relations 
between the state objects of 
two fixed coalgebras. A relation is itself an object of the base category.

Some authors (e.g.~\cite{ddlp-bisimcocong}) are concerned 
with defining an equivalence relation
on the class of all coalgebras, by setting two coalgebras as bisimilar 
if there
is a span of surjective homomorphisms between them.
Others work with relations as bimodules (e.g.~\cite{bhmw-pseudometric,r-relators,w-omegacat}).
We will not discuss these approaches here.

\subsection*{Acknowledgements.}
It has been helpful to discuss the material in this article with numerous people over the last eight years, particularly Marcelo Fiore. Benno van den Berg gave some advice on algebraic set theory. 
Many of the results in this article are well-known in the case where
$\CatA=\Set$. In other cases, some results may be folklore; I have
tried to ascribe credit where it is due.

\section{Coalgebras: Definitions and examples}
\label{sec:coalgexamples}

\noindent Recall the definition of a coalgebra for an endofunctor:
\begin{defi}
Consider an endofunctor~$\ArbBehA$ 
on a category~$\CatA$.
A \emph{\gcoalgtext\ArbBehA}\ is given by an 
object $\ArbCarrierA$ of $\CatA$
together with morphism
${\ArbCarrierA\to\ArbBehA(\ArbCarrierA)}$
in $\CatA$.

A \emph{homomorphism} of \gcoalgtext\ArbBehA s, 
from $(\ArbCarrierA,\CoalgA)$ to 
$(\ArbCarrierB,\CoalgB)$, is a 
morphism ${\ArbMorphA:\ArbCarrierA\to\ArbCarrierB}$ 
that respects the coalgebra structures, i.e.\ 
such that 
${\ArbBehA\ArbMorphA\compose\CoalgA
=\CoalgB\compose\ArbMorphA}$.
\end{defi}

\subsection{Examples}
\label{sec:examples}
We collect some examples of concepts that 
arise as coalgebras for endofunctors.
For further motivation, 
see \cite{a-introcoalg,jr-coalgtutorial,r-univcoalg}.

\subsubsection*{Coinductive datatypes.}
Coinductive datatypes can be understood in terms of
coalgebras for polynomial endofunctors.
A polynomial endofunctor on a category with sums and products
is a functor of the following form.
(See e.g.~Rutten~\cite[\textcitesec~10]{r-univcoalg}.)
\[
\SetA\ \mapsto\  \sum_{i\in I} A_i\times \SetA^{n_i}
\]
(Here, each $A_i$ is an object of the category, 
  and each $n_i$ is a natural number.)

\subsubsection*{Transition systems.}
In the introduction we discussed 
the correspondence between labelled transition systems 
and coalgebras for the endofunctor
$\SetPow(\CCSLabels\times(-))$.
Here, $\SetPow$ is the powerset functor, that acts by direct image.
For finite non-determinism, and image-finite transition 
systems, one can instead consider the endofunctor
\[\SetPowF(\CCSLabels\times(-))\]
where $\SetPowF$ is the finite powerset functor, the free semilattice.

\subsubsection*{Transition systems in toposes and name-passing calculi.}
Recall that a topos is a category with finite limits and a powerobject
construction. By definition, the powerobjects classify relations, and
so the coalgebraic characterization
of labelled transition systems
is relevant in any topos.

In process calculi such as the \picalculus~\cite{mpw-pi}, transitions
occur between terms with free variables, and those free variables play
an important role.  Conventional labelled transition systems in the
category of sets are inadequate for such calculi. Instead, one can
work in a category of covariant presheaves $(\Ccat\to\Set)$.  Various
categories have been proposed for~$\Ccat$. We will focus on two 
examples: the
category $\Icat$ of finite sets and injections between them, and the category
$\Fnecat$ of non-empty finite sets and all functions between them.
More sophisticated models of process calculi are found by taking presheaves over more elaborate categories
(see e.g.~\cite{bm-coalgreact,bk-pilogic,gyv-open,s-lics08}).

In this setting, the object $X$ of states is no longer a set,
but a presheaf. For instance, if $X\colon \Icat\to\Set$, 
we think of $X(\objC)$ as the set of states involving the free variables
in the set~$\objC$, and the functorial action of $X$ describes 
injective renaming of states.

The appropriate endofunctor on these presheaf categories 
typically has the following form
\begin{equation}
\label{eqn:npbeh}
\SetPow(B'(-))
\qquad\text{where e.g.}~B'(-)=(N \times (-)^N\, + \,N \times N \times (-)\, +\, (-))
\end{equation}
with the summands of $B'$ 
representing input, output, and silent actions
respectively. The presheaf $N$ is a special object of names.
The interesting question is: what is $\SetPow$?
\begin{enumerate}[$\bullet$]
\item
A natural choice 
is to let $\SetPow$ be the powerobject functor in the presheaf
topos $\PshfC$.
For any presheaf ${Y\in\PshfC}$, and any object~$\objC\in\Ccat$,
$(\SetPow(Y))(\objC)$ is the set of sub-presheaves of $(\Ccat(\objC,-)\times Y)$.
A coalgebra $X\to\SetPow(\BehA'(X))$ 
is a natural transformation between presheaves,
that assigns a behaviour to each state $x\in X(\objC)$, for $\objC\in \Ccat$.
This behaviour is not only a set of future states for $x$, 
but also the future states of $Xf(x)$ for any morphism
$f\colon \objC\to \objC'$ in $\Ccat$.

To understand this more formally, 
recall that the powerobject $\SetPow$ classifies relations in the following sense. 
For presheaves $X$ and $Y$ in $\PshfC$, 
there is a bijective 
correspondence between natural transformations $r\colon X\to\SetPow(Y)$ 
and subpresheaves $R\subseteq X\times Y$.
A subpresheaf $R\subseteq X\times Y$ 
determines a natural transformation $r\colon X\to\SetPow(Y)$;
for $\objC\in \Ccat$ and $x\in X(\objC)$, 
we have 
$r_\objC(x)\in(\SetPow(Y))(\objC)$:
\[
(r_\objC(x))(\objC')=\{(f,y)~|~(Xf(x),y)\in R(\objC')\}
\quad\text.
\]
\item
The powerobject $\SetPow(X)$ accommodates infinite branching transition systems.
To focus on finite branching,
we can find a `finite' subfunctor of $\SetPow$.

The approach taken by Fiore and Turi~\cite{ft-namepassing}
is to let $\SetPow$ be the free semi-lattice (henceforth~$\SetPowF$).
This is sometimes called `Kuratowski finiteness'.
For any presheaf ${Y\in\PshfC}$, and any object $\objC\in\Ccat$,
$(\SetPowF(Y))(\objC)$ is the set of finite subsets of
$Y(\objC)$.
We have an natural monomorphism $i_Y\colon \SetPowF(Y)\monoto \SetPow(Y)$ 
into the full powerobject:
for $\objC\in\Ccat$, $S\subseteq (Y(\objC))$,
we define a subpresheaf $i_{Y,\objC}(S)\in (\SetPow(Y))(\objC)$:
\[
(i_{Y,\objC}(S))(\objC')=\{(f,Yf(y))~|~f\colon \objC\to \objC',\,y\in S\}\quad\text.
\]
\item 
The free semilattice is too naive on the presheaf category
$\PshfFne$.
For example, the $\pi$-calculus process
$(\bar a~|~b)$ cannot perform a $\tau$-step,
but it can perform a $\tau$-step after the substitution
$\{a\mapsto b,~b\mapsto b\}$.
The construction $\SetPowF(X)$ is too
small to allow this information to be recorded.
Indeed, the $\pi$-calculus can be described 
as a coalgebra for the functor 
$\SetPow(\BehA'(-))$ on $\PshfFne$,
but this coalgebra does not factor through the free semilattice,
$\SetPowF(\BehA'(-))$.

In this situation, a more appropriate finite powerset
is the sub-join-semilattice of 
the powerobject $\SetPow(Y)$ that is generated by
the partial map classifier. 
We will write $\SetPowPF(Y)$ for this ---
Freyd~\cite{f-numerology} writes~$\tilde K$.
I gave an algebraic description of 
this construction in \cite{staton-calco09}, and
it has been used by Miculan in his model of the fusion calculus~\cite{m-fusion}.
\end{enumerate}

\subsubsection*{Frames in modal logic.}
Let the base category $\CatA$
be the category of Stone spaces and continuous maps.  (A Stone space
is a compact Hausdorff space in which the clopen sets form a basis.)
Let $K(X)$ be the space of compact subsets of $X$, with the finite
(aka Vietoris) topology. The construction~$K$ is made into a functor,
acting by direct
image. 
Coalgebras for $K$ can be understood as descriptive general frames.
Just as 
the category of Stone spaces is dual to the category of Boolean algebras,
the category of $K$-coalgebras is dual to the category 
of modal algebras (i.e., Boolean algebras
equipped with a meet-preserving operation)
--- see e.g.~\cite{a-cooksnonwellfounded,kkv-stonecoalg}.

\subsubsection*{Powersets in algebraic set theory.}
A general treatment of powersets is suggested by the algebraic set 
theory of Joyal and Moerdijk~\cite{jm-ast}.
A model of algebraic set theory is a category~$\CatA$ together with a
class of `small' maps~$\SmallS$ in~$\CatA$, all subject to certain
conditions. 
An intuition is that a map $f\colon X\to Y$ is small if its fibres
$f\inv(y)$ are all small.

In such a situation, an \emph{\srelation\SmallS}\ is a 
relation ${\RelA\subseteq \SetA\times\SetB}$ 
for which the projection $\RelA\to\SetA$ is in $\SmallS$.
An endofunctor $\SPow\SmallS$ on $\CatA$ is said to be 
the \emph{\spowerset\SmallS} 
if there is an \srelation\SmallS\ 
$(\ni_\SetB)\subseteq\SPow\SmallS(\SetB)\times\SetB$
inducing a
bijective correspondence between 
\srelation\SmallS s $(\RelA\subseteq \SetA\times \SetB)$ and 
morphisms $\SetA\to \SPow\SmallS(\SetB)$.
Further details are given in the appendix.

These ideas cater for the notions of power set discussed so far.  For instance:
\begin{enumerate}[$\bullet$]
\item 
Let $\CatA$ be the category of sets, in the classical sense,
and say that a function ${f:X\to Y}$ is small if for every $y\in Y$ the set
$f\inv(y)$, i.e. $\{x\in X~|~f(x)=y\}$, is finite. 
This class $\SmallS$ of maps satisfies all the axioms for small maps
given in the appendix.
An \srelation\SmallS\ is precisely 
an image-finite one, and the \spowerset\SmallS\ is the finite powerset.
\item \newcommand{\IndCatA}{\mathbb{C}}%
For a presheaf category $[\IndCatA,\Set]$, 
the free semilattice construction~$\SetPowF$ 
is an \spowerset\SmallS\ where
$\SmallS$ is the class of natural transformations between
presheaves, 
$\phi\colon X\to Y$,
such that (i)~for each $\objC\in\IndCatA$, $y\in Y(\objC)$,
the set $\{x\in X(\objC)~|~f_\objC(x)=y\}$ is finite, and
(ii)~each naturality square is a weak pullback,
i.e., if $\phi_{\objC'}(x')=Yf(y)$ then there is $x\in X(\objC)$ 
such that $\phi_\objC(x)=y$ and $Xf(x)=x'$:
\begin{equation}
\xymatrix{X(\objC)\ar[r]^{\phi_\objC}\ar[d]_{X(f)} &Y(\objC)\ar[d]^{Y(f)}
\\
X(\objC')\ar[r]_{\phi_{\objC'}}& Y(\objC')\text.}
\label{eqn:openmap}
\end{equation}
This class~$\SmallS$ of morphisms always satisfies 
Axioms A1--A6 and~A9 
for small maps, but not (M): monos are not 
small unless $\IndCatA$ is a groupoid.

In the presheaf category
$\PshfFne$, 
the free semilattice generated by the partial map classifier,
$\SetPowPF$, is more liberal: it
classifies the natural transformations between
presheaves
that satisfy the finiteness condition~(i) but the requirement on
naturality~(ii)
is weakened to the situation when $f$ is an injection.
This class of morphisms satisfies 
all the axioms for small maps, including~(M).
(This argument is quite specific to $\Fnecat$.)
\item
Let $\CatA$ be the category of Stone spaces, and let $K(X)$ be the 
space of compact subsets of $X$.
Recall that a continuous map is \emph{open}
if the direct image of an open set is open.
The class $\SmallS$ of open maps in the category of Stone spaces 
satisfies Axioms A1--A6 and~A9 for small maps.
The evident relation ${\ni_X}\subseteq K(X)\times X$ is 
an \srelation\SmallS, and 
exhibits each $K(X)$ as an \spowerset\SmallS.

(Although the Vietoris construction can be considered over more general spaces,
the characterization of $K$ as an \spowerset\SmallS\ is 
specific to Stone spaces. 
This raises the question of how best to treat
more expressive positive/topological set 
theories~\cite{le-topsettheory} 
in an algebraic setting.)
\end{enumerate}
\subsubsection*{Probabilistic transition systems.}
For any set~$\ArbCarrierA$, let~$\DistF(\ArbCarrierA)$ be the set 
of sub-probability distribution functions on~$\ArbCarrierA$,
\viz, 
functions from~$d\colon \ArbCarrierA\to[0,1]$ into the unit interval
for which $\{x\in\ArbCarrierA~|~d(x)\neq 0\}$ is finite and
$\sum_{x\in\ArbCarrierA}d(x)\leq 1$.
This construction extends to an endofunctor on~$\Set$, with 
the covariant action given by summation. 
Coalgebras for~$\DistF$ are discrete 
probabilistic transition systems~%
\cite{bsv-probtypes,vr-bisimprob}.

\subsubsection*{Systems where the state space has more structure}
For continuous stochastic systems, researchers have investigated 
coalgebras for probability distribution 
functors on categories of metric or measurable spaces (see e.g.~\cite{bhmw-pseudometric,ddlp-bisimcocong,v-finalmeas,vr-bisimprob}).

For recursively defined systems, 
it is reasonable to investigate coalgebras for 
powerdomain constructions on a category of 
domains~(in the bialgebraic context, see~\cite{k-recbialg,p-bialgrec}).

Levy and Worrell~\cite{levy-sim,w-omegacat} 
have considered endofunctors on categories of
preorders, posets, and categories enriched
in quantales, in their investigations of similarity.

\section{Bisimulation: four definitions}
\label{sec:notions}

\noindent We now recall four notions of bisimulation
on the state spaces of coalgebras.
The four notions generalize the standard notion of bisimulation
for labelled transition systems
(\ie\ coalgebras for $\SetPow(L \times (-))$ on $\Set$),
due to Milner~\cite{m-ccs}
and Park~\cite{p-bisim}.
For all four notions, the maximal bisimulation is the usual notion of 
strong bisimilarity for labelled transition systems.

To some extent, the different notions of bisimulation
have arisen from the examples in \secref{sec:examples},
as authors sought coalgebraic notions of bisimulation that were 
appropriate to the base category and endofunctor under consideration,
as well as to the intended applications.
For name-passing calculi, coalgebraic bisimulation 
can be used to capture the bisimulations of 
Milner, Parrow and Walker~\cite{mpw-pi} 
and also the open bisimulation of Sangiorgi~\cite{s-openbisim} 
(see e.g.~\cite[Sec.~6]{ft-namepassing,staton-calco09});
for discrete-space probabilistic systems, coalgebraic bisimulation can 
describe
the probabilistic bisimulation of Larsen and Skou~\cite{ls-probbisim}
(see e.g.~\cite{vr-bisimprob}).

\subsubsection*{Relations in categories.}
For objects~$\ArbCarrierA$, $\ArbCarrierB$ of a category~$\CatA$, we
let $\RPOrder$ be the preorder of relations, \viz\ jointly-monic spans
${\ArbCarrierA\leftarrow\RelA\rightarrow\ArbCarrierB}$, where
$\RelA\pleq\RelAa$ if $\RelA$ factors through $\RelAa$.  
When $\CatA$ has products, the preorder $\RPOrder$ coincides with the 
preorder of monos into $(\ArbCarrierA\times\ArbCarrierB)$.
Relations are most well-behaved in regular categories (see Appendix).

\subsubsection*{Context.}
In this section
we fix a category $\CatA$ 
and consider an endofunctor~$\ArbBehA$ 
on~$\CatA$.
We fix two \gcoalgtext\ArbBehA s,
${\CoalgA:\ArbCarrierA\to\ArbBehA\ArbCarrierA}$
and
${\CoalgB:\ArbCarrierB\to\ArbBehA\ArbCarrierB}$.

\subsection[Hello]{The lifting-span bisimulation of Aczel and Mendler
\cite{am-finalcoalg}}
This notion is categorically the simplest. It directly dualizes the concept
of congruence from universal algebra.
\begin{defi}
A relation
${\RelA\in\RPOrder}$
is an
\emph{\AMbisimtext}\ between 
$(\ArbCarrierA,\CoalgA)$ and
$(\ArbCarrierB,\CoalgB)$
if there exists a \gcoalgtext\ArbBehA\ structure
on $\RelA$ that lifts it to 
a span of coalgebra homomorphisms, as in the following diagram.
\[
\xymatrix{
\ArbCarrierA\ar[d]_\CoalgA
&
\RelA\ar[l]\ar[r]
\ar@{..>}[d]^\exists
&
\ArbCarrierB\ar[d]^\CoalgB
\\
\ArbBehA\ArbCarrierA
&
\ArbBehA\RelA
\ar[r]\ar[l]
&
\ArbBehA\ArbCarrierB
}
\]
\end{defi}

\hide
{An equivalent definition is possible if $\CatA$ has products:
a mono
${\RelA\monoto\ArbCarrierA\times\ArbCarrierB}$ 
is
a \AMbisimtext\ if there
is a coalgebra structure
${\relA:\RelA\to\ArbBehA\RelA}$ making the following 
diagram commute.
\begin{equation}
\label{dgm:bisimprod}
\xymatrix{
  \RelA
  \ar[d]_\relA
  \ar@{ >->}[rr]
  &&\ArbCarrierA\times\ArbCarrierB
  \ar[d]^{\CoalgA\times\CoalgB}
  \\
  \BehA\RelA
  \ar[r]
  &
  \BehA(\ArbCarrierA\times\ArbCarrierB)
  \ar[r]
  &
  \BehA\ArbCarrierA\times\BehA\ArbCarrierB
}
\end{equation}
}
\subsection{The relation-lifting bisimulation of Hermida and Jacobs \cite{hj-indcoindfib}}
The following is a simplification of the bisimulation of Hermida and
Jacobs, who work in a more general fibrational setting.
\label{sec:fixedpoints}
\label{sec:finalbisimexist}
\newcommand{\LimOP}{L}
\newcommand{\limOP}{l}
\begin{defi}
Let~$\CatA$ have products and images. (See the Appendix for definition.)
For any relation 
$\RelA\in\RPOrder$,
we define the relation
$\LiftedBehA\RelA\in\BehRPOrder$ to be the 
image of the composite morphism
${\BehA\RelA
\to\BehA(\ArbCarrierA\times\ArbCarrierB)
\rightarrow
\BehA\ArbCarrierA\times\BehA\ArbCarrierB}
$.
(The construction $\LiftedBehA$ is called 
the ``relation lifting'' of $\BehA$.)

A relation $\RelA$ in $\RPOrder$ is an \emph{\HJbisimtext}\ if 
there is a morphism $\RelA\to\LiftedBehA(\RelA)$ making the following
diagram commute.
\[
\xymatrix{
\ArbCarrierA\ar[d]_\CoalgA&\RelA\ar[l]\ar[r]\ar@{..>}[d]&\ArbCarrierB\ar[d]^\CoalgB
\\
\BehA\ArbCarrierA&\LiftedBehA\RelA\ar[l]\ar[r]&\BehA\ArbCarrierB
}
\]
\end{defi}
\noindent 
When $\CatA$ has pullbacks, let 
$\RelOP(\RelA)$ be the following pullback:
\[
\xymatrix@R-6pt{
  \RelOP(\RelA)\ar[r]\ar@{ >->}[d]\pb{-45}&\LiftedBehA\RelA\ar@{ >->}[d]
  \\
  \SetA\times\SetB\ar[r]_-{\CoalgA\times\CoalgB}&\BehA\SetA\times\BehA\SetB
}
\]
By definition, a relation $\RelA$ is an
\emph{\HJbisimtext} if and only if $\RelA\leq \RelOP(\RelA)$.
\begin{prop}
\label{prop:relopmonotone}
  The operator $\RelOP$ on $\RPOrder$ is monotone.\qed
\end{prop}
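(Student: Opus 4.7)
The plan is to decompose monotonicity of $\RelOP$ into two easier monotonicity facts: first that relation lifting $\LiftedBehA : \RPOrder \to \BehRPOrder$ is monotone, and second that pullback along the fixed morphism $\CoalgA\times\CoalgB$ preserves the subobject order on monos into $\BehA\ArbCarrierA\times\BehA\ArbCarrierB$. Combining these gives the result, since $\RelOP(\RelA)$ is by definition the pullback of $\LiftedBehA\RelA\monoto\BehA\ArbCarrierA\times\BehA\ArbCarrierB$ along $\CoalgA\times\CoalgB$.

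For the first fact, suppose $\RelA\pleq\RelAa$ in $\RPOrder$, so that $\RelA$ factors as $\RelA\to\RelAa\monoto\ArbCarrierA\times\ArbCarrierB$. Applying $\BehA$ and composing with the canonical map $\BehA(\ArbCarrierA\times\ArbCarrierB)\to\BehA\ArbCarrierA\times\BehA\ArbCarrierB$, the morphism $\BehA\RelA\to\BehA\ArbCarrierA\times\BehA\ArbCarrierB$ factors through the mono $\LiftedBehA\RelAa\monoto\BehA\ArbCarrierA\times\BehA\ArbCarrierB$ via the chain $\BehA\RelA\to\BehA\RelAa\to\LiftedBehA\RelAa$. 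By the universal property of the image — $\LiftedBehA\RelA$ is the smallest mono through which $\BehA\RelA\to\BehA\ArbCarrierA\times\BehA\ArbCarrierB$ factors — we get $\LiftedBehA\RelA\pleq\LiftedBehA\RelAa$ in $\BehRPOrder$.

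For the second fact, pullback of subobjects along a fixed morphism $f$ is order-preserving in general: if $m\pleq m'$ as monos into the codomain of $f$, a factorization $m=m'\compose k$ lifts to a factorization of the pullback $f^*(m)$ through $f^*(m')$ by the universal property of the pullback of $m'$. Applying this with $f=\CoalgA\times\CoalgB$ to the inequality $\LiftedBehA\RelA\pleq\LiftedBehA\RelAa$ yields $\RelOP(\RelA)\pleq\RelOP(\RelAa)$, as required.

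The proof is essentially routine once the two ingredients are identified; the only mildly delicate step is invoking the universal property of images to compare $\LiftedBehA\RelA$ and $\LiftedBehA\RelAa$, which relies on having images available — a hypothesis the definition of $\LiftedBehA$ already assumed of $\CatA$. No further assumptions on $\BehA$ (such as preservation of monos or weak pullbacks) are needed.
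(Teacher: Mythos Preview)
Your argument is correct: monotonicity of $\LiftedBehA$ follows from the universal property of images exactly as you describe, and pulling back subobjects along the fixed morphism $\CoalgA\times\CoalgB$ is order-preserving for the standard reason you give. The paper itself supplies no proof at all for this proposition --- it is stated with an immediate \qed\ and treated as routine --- so your write-up is simply the standard verification that the author chose to omit.
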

\newcommand{\EltA}{x}%
\newcommand{\EltAa}{{x'}}%
\newcommand{\EltB}{y}%
\newcommand{\EltBb}{{y'}}%
For illustration, we briefly return to  the situation
of transition systems,
where 
${\CatA=\Set}$ 
and ${\ArbBehA=\SetPow(\CCSLabels\times-)}$.
For any relation
${\RelA\in\RPOrder}$,
the refined relation 
${\RelOP(\RelA)}$ in ${\RPOrder}$
is the set of all pairs 
$(\EltA,\EltB)\in\ArbCarrierA\times\ArbCarrierB$
for which 
\begin{align*}
\text{(i)\ }&
\Forall{(\CCSLab,\EltAa)\in\CoalgA(\EltA)}
{\Exists{\EltBb\in\ArbCarrierB}
  {(\CCSLab,\EltBb)\in\CoalgB(\EltB)
    \text{ and }
    (\EltAa,\EltBb)\in\RelA}}
\text{ ;}
\\
\text{(ii)\ }&
\Forall{(\CCSLab,\EltBb)\in\CoalgB(\EltB)}
{\Exists{\EltAa\in\ArbCarrierA}
  {(\CCSLab,\EltAa)\in\CoalgA(\EltA)
    \text{ and }
    (\EltAa,\EltBb)\in\RelA}}
\ \eqnstop
\end{align*}
Thus 
the operator~%
$\RelOP$ is the construction~%
$\MilnerF$ considered 
by Milner~\cite[\textcitesec~4]{m-csa}.

\subsection{The congruences of Aczel and Mendler
\cite{am-finalcoalg}}
\begin{defi}
A relation $\RelA$ in $\RPOrder$
is an \emph{\AMprecong} if 
for every cospan
$(\SetA\xrightarrow\poinA\SetZ\xleftarrow\poinB \SetB)$,
\[
\xymatrix@R-18pt@C-6pt{
&&\SetA\ar[dr]^\poinA&
&&
&\SetA\ar[r]^-\CoalgA&\BehA\SetA\ar[dr]^{\BehA(\poinA)}
\\
\text{if}&
\RelA\ar[ur]\ar[dr]&&\SetZ
&\text{commutes then so does}&
\RelA\ar[ur]\ar[dr]&&&\BehA\SetZ
\\
&
&\SetB\ar[ur]_\poinB&
&&
&\SetB\ar[r]_-\CoalgB&\BehA\SetB\ar[ur]_{\BehA(\poinB)}
}
\]
\end{defi}
\noindent 
The definition might appear clumsy and unmotivated,
but \AMprecong s are 
of primary interest because of their connection with 
terminal coalgebras in a general setting,
as will become clear in Theorem~\ref{thm:relate}.

If $\CatA$ has pushouts, then it is sufficient to check 
the case where $\SetZ$ is the pushout of $\RelA$.
If $\CatA$ also has pullbacks, let $\RelCOP(\RelA)$ be the pullback 
of the cospan 
\[
\SetA\xrightarrow\CoalgA\BehA\SetA\xrightarrow{\BehA\poinA}\BehA\SetZ
\xleftarrow{\BehA\poinB}\BehA\SetB\xleftarrow\CoalgB\SetB\quad\text.
\]
By definition, a relation $\RelA$ is an
\emph{\AMprecong} if and only if $\RelA\leq \RelCOP(\RelA)$.

\begin{prop}
\label{prop:COPmonotone}
  The operator $\RelCOP$ on $\RPOrder$ is monotone.\qed
\end{prop}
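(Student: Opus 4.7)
The plan is to exploit the two universal properties at work in the definition of $\RelCOP$. Suppose $\RelA\pleq\RelAa$ in $\RPOrder$, so that $\RelA$ factors through $\RelAa$ via a morphism $\RelA\to\RelAa$ compatible with the projections to $\SetA$ and $\SetB$. I would first construct the canonical map between the corresponding pushouts: let $\SetZ$ and $\SetZz$ be the pushouts of $\RelA$ and $\RelAa$ respectively, with pushout cocones $(\poinA,\poinB)$ and $(\poinA',\poinB')$. Since $(\poinA',\poinB')$ coequalizes the two legs of $\RelAa$, precomposing with $\RelA\to\RelAa$ shows that it also coequalizes the two legs of $\RelA$, so the universal property of $\SetZ$ supplies a unique $u\colon\SetZ\to\SetZz$ with $u\compose\poinA=\poinA'$ and $u\compose\poinB=\poinB'$.

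Next I would transport this comparison along $\BehA$ and the coalgebra structures. By functoriality, $\BehA(u)\compose\BehA\poinA=\BehA\poinA'$ and likewise for $\poinB$. The relation $\RelCOP(\RelA)$, being the stated pullback, comes equipped with legs to $\SetA$ and $\SetB$ satisfying $\BehA\poinA\compose\CoalgA\compose(\text{leg to }\SetA)=\BehA\poinB\compose\CoalgB\compose(\text{leg to }\SetB)$. Post-composing this equation with $\BehA(u)$ yields the analogous equation with $\poinA',\poinB'$ in place of $\poinA,\poinB$, so the universal property of the pullback $\RelCOP(\RelAa)$ induces a unique morphism $\RelCOP(\RelA)\to\RelCOP(\RelAa)$ commuting with the projections to $\SetA$ and $\SetB$. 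This witnesses $\RelCOP(\RelA)\pleq\RelCOP(\RelAa)$ in $\RPOrder$.

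There is no real obstacle here: the whole argument is a formal diagram chase through the universal properties of pushouts and pullbacks, combined with functoriality of $\BehA$. The one point worth checking in passing is that the two legs of the pullback to $\SetA$ and $\SetB$ are jointly monic, so that $\RelCOP(\RelA)$ and $\RelCOP(\RelAa)$ genuinely sit in $\RPOrder$; this is automatic for any pullback of a cospan, by the uniqueness clause of its universal property.
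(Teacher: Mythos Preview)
Your argument is correct and is precisely the expected one: push the comparison $R\to R'$ forward to a map of pushouts, then pull back through $\BehA$ to obtain the factoring in $\RPOrder$. The paper itself gives no proof at all (the statement is marked \qedsymbol\ immediately), so there is nothing substantive to compare; you have simply spelled out the routine diagram chase that the author deemed obvious.
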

\noindent
(Note that $\RelCOP$ is different from $\RelOP$, even when $\BehA$ is
the identity functor on $\Set$.)

Our definition differs from that of \cite{am-finalcoalg}
in that we consider relations between different coalgebras.
The connection is as follows: if $(\SetA,\CoalgA)=(\SetB,\CoalgB)$,
then an equivalence relation is an \AMprecong\ 
exactly when it is a congruence in the sense of Aczel and 
Mendler~\cite{am-finalcoalg}
(see \secref{sec:equiv}).

\subsection{Terminal coalgebras and kernel-bisimulations}
Many authors have argued that 
when the category of coalgebras has a terminal object, equality in 
the terminal coalgebra is the right notion of bisimilarity.
Suppose that the category $\CatA$ 
has pullbacks, 
and suppose for a moment that there is a terminal \gcoalgtext\BehA,
$(Z,z)$.
This induces a relation in $\RPOrder$
as the pullback of the unique terminal morphisms
$X\to Z\leftarrow Y$. The relation is sometimes called `behavioural equivalence'.

We can formulate a related notion of bisimulation without 
assuming that there is a terminal 
coalgebra.
\begin{defi}
Let $\CatA$ have pullbacks. 
A relation $\RelA$ is a \emph{kernel-bisimulation} 
if there is a 
\gcoalgtext\BehA\ $(\ArbCarrierC,\CoalgC)$
and a cospan of homomorphisms,
${(\ArbCarrierA,\CoalgA)\rightarrow
(\ArbCarrierC,\CoalgC)\leftarrow
(\ArbCarrierB,\CoalgB)}$,
and $\RelA$ is the pullback 
of ${(\ArbCarrierA\rightarrow\ArbCarrierC\leftarrow\ArbCarrierB)}$.
\end{defi}
Aside from the great many works involving terminal coalgebras,
various authors (e.g.~\cite{gs-typecoalg,k-recbialg,kurz-thesis})
have used kernel-bisimulations (though not by this name).
(The term `cocongruence' is sometimes used
to refer directly to the cospan involved.)

\section{Properties of endofunctors}
\label{sec:propendo}

\noindent All the definitions of the previous section are relevant when $\CatA$
is a regular category. From a categorical perspective, one might
restrict attention to endofunctors that are regular, i.e.\ that
preserve limits and covers. But none of the 
endofunctors in 
\secref{sec:examples} are regular.  We now recall five weaker conditions that
might be assumed of our endofunctor~%
$\BehA$.

\begin{enumerate}[(1)]
\item 
The image of a relation under a functor need not again be a relation,
and one can restrict attention 
to endofunctors that \emph{preserve relations},
i.e., for which a jointly-monic span is mapped to a jointly-monic span.
\end{enumerate}
The remaining restrictions have to do with weak forms of 
pullback-preservation.
To introduce them, we consider
a cospan~${(A_1\rightarrow Z\leftarrow A_2)}$
in~$\CatA$, and in particular the mediating morphism
$
m\colon\ArbBehA(A_1\times_Z A_2) \to (\ArbBehA A_1)\times_{\ArbBehA Z} (\ArbBehA A_2)
$
from the image of the pullback to the pullback of the image:
\[
\xymatrix@R-10pt@C-5pt{
&A_1\ar[dr]
\\
A_1\times_Z A_2
\pbbb{0}
\ar[ur]^{\pi_1}\ar[dr]_{\pi_2}
&&\SetZ
\\
&A_2\ar[ur]
}\qquad
\xymatrix@R-10pt@C-5pt{
&&\BehA A_1\ar[dr]
\\
\BehA(A_1\times_ZA_2)\ar@{..>}[r]^m
\ar@/^/[rru]^{\BehA\pi_1}\ar@/_/[rrd]_{\BehA\pi_2}
&
{\BehA A_1\times_{\BehA Z}\BehA A_2\hspace{-1cm}}
\ar[ur]\ar[dr]
&&\BehA Z
\\
&&\BehA A_2\ar[ur]}
\]
Here are some conditions on $\BehA$, listed in order of decreasing strength.
\begin{enumerate}[(1)]
\item[(2)]
  $\ArbBehA$ \emph{preserves pullbacks}
  if $m$ is always an isomorphism.
\item[(3)]
  $\ArbBehA$ \emph{preserves weak pullbacks},
  if $m$ is always
  split epi. Gumm~\cite{g-coalgfunctorwpp} describes 
  several equivalent definitions
  of this term.
\item[(4)]
  $\ArbBehA$ \emph{covers pullbacks} 
  if $m$ is always a cover.
  The terminology is due to   \cite{t-relautomata}.
  Note that a split epi is always a cover. 
\item[(5)]
  $\ArbBehA$ \emph{preserves pullbacks along monos}
  if $m$ is an isomorphism when $A_1\to Z$ is monic.
\end{enumerate}
Tying up with relation preservation (1):~%
$\BehA$
preserves pullbacks if and only if 
it preserves relations and covers pullbacks
(see Carboni et al.~\cite[Sec.~4.3]{ckw-twocatgeom}).

\subsection{Relevance of the properties}
\begin{prop}\hfill
\begin{enumerate}[\em(1)]
\item 
  Let $\Psi$ be one of the following properties: 
  relation preservation, pullback preservation, 
  weak pullback preservation, preservation of pullbacks along monos.
  The composition of two endofunctors satisfying~$\Psi$ also satisfies~$\Psi$.
\item If $\BehA$ and $\BehAa$ both cover pullbacks and $\BehAa$ 
  preserves covers, then the composite $(\BehAa\BehA)$ also covers pullbacks.
\qed
\end{enumerate}
\end{prop}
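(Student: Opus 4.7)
The plan is to reduce every case to a single observation about mediating morphisms under composition. Given any cospan $(A_1\to Z\leftarrow A_2)$, the mediator
\[
m_{\BehAa\BehA}\colon \BehAa\BehA(A_1\times_Z A_2)\to \BehAa\BehA A_1\times_{\BehAa\BehA Z}\BehAa\BehA A_2
\]
factors as $m_{\BehAa}\circ\BehAa(m_{\BehA})$, where $m_{\BehA}$ is the mediator for $\BehA$ on the original cospan and $m_{\BehAa}$ is the mediator for $\BehAa$ on the cospan $(\BehA A_1\to\BehA Z\leftarrow\BehA A_2)$. This is a routine diagram chase from the universal property of pullbacks. Once established, each claim reduces to knowing which classes of morphisms are preserved by arbitrary functors and closed under composition.

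For part (1), pullback preservation follows because isomorphisms are preserved by any functor and compose; weak pullback preservation follows because any functor preserves split epis (a section maps to a section) and split epis compose. Relation preservation bypasses the mediator altogether: a jointly monic span is sent to a jointly monic span by $\BehA$ and then by $\BehAa$. Preservation of pullbacks along monos requires one extra observation, namely that any such functor also preserves monos; this follows by applying the hypothesis to the cospan $A_1\xrightarrow{f}Z\xleftarrow{f}A_1$, whose pullback is $A_1$ with both projections the identity precisely when $f$ is mono, so the mediator being iso forces $\BehA f$ to be mono. Consequently $\BehA$ sends a cospan with a monic leg to one whose relevant leg is still mono, and the mediator factorization, with both factors iso by hypothesis, yields the result.

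For part (2), the same factorization yields $m_{\BehAa\BehA}=m_{\BehAa}\circ\BehAa(m_{\BehA})$, in which $m_{\BehA}$ is a cover because $\BehA$ covers pullbacks, $\BehAa(m_{\BehA})$ is a cover because $\BehAa$ preserves covers by hypothesis, and $m_{\BehAa}$ is a cover because $\BehAa$ covers pullbacks. Covers are closed under composition in any regular category, so $m_{\BehAa\BehA}$ is a cover.

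The main obstacle is really just the mediator factorization; beyond that, each case is an elementary closure property. The extra hypothesis in part (2) that $\BehAa$ preserves covers is essential: without it, $\BehAa(m_{\BehA})$ need not be a cover and the argument breaks down, which is why cover preservation is not bundled with the property itself in the definition.
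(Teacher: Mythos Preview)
Your argument is correct. The paper does not supply a proof for this proposition at all---it is stated with a bare \qed---so there is no ``paper's approach'' to compare against; you have simply filled in the routine verification that the author left to the reader. The mediator factorization $m_{\BehAa\BehA}=m_{\BehAa}\circ\BehAa(m_{\BehA})$ is indeed the natural organizing device, and your extra observation that preservation of pullbacks along monos entails preservation of monos (via the kernel-pair trick) is exactly what is needed to make that case go through. One minor remark: covers (strong epimorphisms) are closed under composition in any category with finite limits and images, not only regular ones, so your closing appeal to regularity in part~(2) is slightly stronger than necessary---but the paper's ambient setting is regular anyway, so this is harmless.
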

\begin{prop}
Every polynomial endofunctor on an extensive category preserves pullbacks 
and covers.\qed
\end{prop}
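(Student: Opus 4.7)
The plan is to decompose a polynomial endofunctor $B(X) = \sum_{i \in I} A_i \times X^{n_i}$ into its elementary constituents — the identity functor, constant functors, finite powers $X \mapsto X^n$, product with a constant $X \mapsto A \times X$, and the $I$-indexed coproduct functor — verify preservation of pullbacks and covers for each, and then invoke closure under composition from the preceding proposition.

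Several pieces are straightforward. Constant functors preserve pullbacks and covers trivially, as does the identity functor. The $n$-fold power and product-with-constant functors are finite limits, hence commute with pullbacks (limits commute with limits). For covers, observe that $A \times e$ is the pullback of $e\colon X \to Y$ along the projection $A \times Y \to Y$, so covers are preserved by product with a constant; iterating shows $e^n$ is a cover whenever $e$ is, using stability of covers under pullback.

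The main step, and the only place where extensivity is genuinely needed, is the $I$-indexed coproduct functor $\sum_{i \in I}\colon \prod_i \CatA \to \CatA$. Extensivity is precisely the statement that the slice $\CatA/(\sum_i Z_i)$ is equivalent to $\prod_i \CatA/Z_i$; this equivalence implies that pullbacks of coproducts are coproducts of pullbacks. When we apply $B$ to a cospan $A_1 \to Z \leftarrow A_2$, the resulting cospan ${\sum_i A_i \times A_1^{n_i} \to \sum_i A_i \times Z^{n_i} \leftarrow \sum_i A_i \times A_2^{n_i}}$ respects the coproduct decomposition summand by summand. Extensivity therefore lets us compute its pullback as $\sum_i \bigl(A_i \times A_1^{n_i} \times_{A_i \times Z^{n_i}} A_i \times A_2^{n_i}\bigr) \iso \sum_i A_i \times (A_1 \times_Z A_2)^{n_i} = B(A_1 \times_Z A_2)$, so the mediating morphism $m$ is an isomorphism. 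The same slice-category equivalence, together with the fact that covers in each factor $\CatA/Z_i$ paste to a cover of the coproduct, shows the coproduct of a family of covers is a cover.

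The main obstacle is executing the coproduct step rigorously: one must verify that the action of $B$ on morphisms respects the coproduct decomposition (so the pullback of $B$-applied cospans really is a pullback of a coproduct of cospans, summand by summand), and then apply extensivity in its slice-category form. Once that bookkeeping is in place, both pullback and cover preservation for coproducts fall out, and the previous proposition combines the pieces into $B$.
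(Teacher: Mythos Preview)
The paper offers no proof for this proposition (it is stated with only a \qed), so there is nothing to compare your argument against; your decomposition into constants, finite powers, product with a constant, and $I$-indexed coproduct---with extensivity supplying pullback preservation at the coproduct step---is the standard way to fill in the details and is correct. One small caveat: your argument that $(-)^n$ and $A\times(-)$ preserve covers invokes pullback-stability of covers, which is regularity rather than mere extensivity; this is not literally part of the stated hypothesis but is the ambient assumption throughout the paper, so the gap is only in the proposition's phrasing, not in your reasoning.
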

Regarding powerset functors from algebraic set theory (see Appendix),
we have the following general results.
\begin{prop}
\label{prop:astwpp}
  Let $\CatA$ be regular, and let~$\SmallS$ be a class of open maps
  in~$\CatA$.
  Suppose that $\SPow\SmallS$ is an \spowerset\SmallS.
\begin{enumerate}[\em(1)]
\item\label{astwpp:one} 
  The functor $\SPow\SmallS$ preserves pullbacks along monomorphisms.
\item \label{astwpp:two} 
  If $\SmallS$ 
  contains all monomorphisms~(M), then~$\SPow\SmallS$ 
  preserves weak pullbacks.
\item\label{astwpp:three} Let $\CatA$ also be extensive, and let $\SmallS$ 
  satisfy the axioms for extensive categories
  (axioms~(\mbox{A1--6})) and also collection~(A9),
  but not necessarily~(M).
  The functor $\SPow\SmallS$ covers pullbacks.
\end{enumerate}
\end{prop}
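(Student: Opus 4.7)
All three parts follow from analysing the universal property of $\SPow\SmallS$. For any $\ArbCarrierA$, generalized elements $\ArbCarrierA\to\SPow\SmallS(A_1\times_Z A_2)$ correspond to \srelation s $\ArbRelA\subseteq\ArbCarrierA\times(A_1\times_Z A_2)$, and generalized elements $\ArbCarrierA\to\SPow\SmallS(A_1)\times_{\SPow\SmallS(Z)}\SPow\SmallS(A_2)$ correspond to pairs $(\ArbRelA_1,\ArbRelA_2)$ of \srelation s on $\ArbCarrierA\times A_i$ whose regular images in $\ArbCarrierA\times Z$ coincide. The comparison map $m$ sends $\ArbRelA$ to its pair of images along $A_1\times_Z A_2\to A_i$. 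Parts (1)--(3) amount to showing $m$ is, respectively, an iso, split epi, or a cover.

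\textbf{Parts (1) and (2).} For (1), the mono $A_1\monoto Z$ induces a mono $\ArbCarrierA\times A_1\monoto\ArbCarrierA\times Z$, so the equal-images condition forces $\ArbRelA_2\to\ArbCarrierA\times Z$ to factor uniquely through $\ArbCarrierA\times A_1$; pairing with the given inclusion $\ArbRelA_2\monoto\ArbCarrierA\times A_2$ produces a unique mono $\ArbRelA_2\monoto\ArbCarrierA\times(A_1\times_Z A_2)$ whose image-projections recover $(\ArbRelA_1,\ArbRelA_2)$, so $m$ is an iso. For (2), given a general compatible pair, set $\ArbRelA=\ArbRelA_1\times_{\ArbCarrierA\times Z}\ArbRelA_2$, viewed as a subobject of $\ArbCarrierA\times(A_1\times_Z A_2)$. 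The comparison $\ArbRelA\monoto\ArbRelA_1\times_\ArbCarrierA\ArbRelA_2$ is the pullback of the diagonal $Z\to Z\times Z$, hence a mono; under (M) it is small. Since $\ArbRelA_1,\ArbRelA_2$ are small over $\ArbCarrierA$, so is $\ArbRelA_1\times_\ArbCarrierA\ArbRelA_2$ by pullback-stability and composition of small maps, and thus $\ArbRelA\to\ArbCarrierA$ is small. Regularity plus equality of the images implies $\ArbRelA\coverto\ArbRelA_i$ is a cover, so the image of $\ArbRelA$ in $\ArbCarrierA\times A_i$ is exactly $\ArbRelA_i$. Applied to the identity generalized element of the pullback of powersets, this construction yields a section of $m$.

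\textbf{Part (3) and main obstacle.} Without (M) the mono $\ArbRelA\monoto\ArbRelA_1\times_\ArbCarrierA\ArbRelA_2$ need not be small, so the construction of (2) produces no map into $\SPow\SmallS(A_1\times_Z A_2)$ in general. But here we only need $m$ to be a cover, so it suffices to lift any compatible pair after pulling back along some cover $\ArbCarrierAa\coverto\ArbCarrierA$. The plan is to apply the collection axiom (A9) to the small map $\ArbRelA_1\to\ArbCarrierA$ and the cover $\ArbRelA_1\times_{\ArbCarrierA\times Z}\ArbRelA_2\coverto\ArbRelA_1$ (a cover because the images in $\ArbCarrierA\times Z$ coincide). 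Collection then furnishes a cover $\ArbCarrierAa\coverto\ArbCarrierA$ together with a small subobject $\ArbRelAa\subseteq\ArbCarrierAa\times(A_1\times_Z A_2)$ refining the pulled-back cover, whose images in $\ArbCarrierAa\times A_i$ are the pullbacks $\ArbCarrierAa\times_\ArbCarrierA\ArbRelA_i$; the induced map $\ArbCarrierAa\to\SPow\SmallS(A_1\times_Z A_2)$ supplies the required cover-lift. The main obstacle is precisely this invocation of collection: A9 must be applied in a form yielding a single small subobject of $\ArbCarrierAa\times(A_1\times_Z A_2)$ with controlled images, rather than merely a local family of choices. Extensivity enters to guarantee that the regular and image calculations on $\ArbCarrierA\times Z$ (in particular, that $\ArbRelA\coverto\ArbRelA_i$ is a cover under equal images) behave properly.
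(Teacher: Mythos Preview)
Your treatments of parts~(1) and~(2) are correct and essentially agree with the paper's argument, only phrased via generalized elements rather than set-theoretic notation; the construction $\ArbRelA=\ArbRelA_1\times_{\ArbCarrierA\times Z}\ArbRelA_2$ is exactly the paper's $\{(a_1,a_2)\in S_1\times S_2\mid f(a_1)=g(a_2)\}$, and the smallness argument via~(M) is the separation step.

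Part~(3), however, has a genuine gap. A single application of collection to the small map $\ArbRelA_1\to\ArbCarrierA$ and the cover $\ArbRelA=\ArbRelA_1\times_{\ArbCarrierA\times Z}\ArbRelA_2\coverto\ArbRelA_1$ yields (after passing to a cover $\ArbCarrierAa\coverto\ArbCarrierA$) a small $\ArbRelAa$ over $\ArbCarrierAa$ whose image in $\ArbCarrierAa\times A_1$ is the pullback of~$\ArbRelA_1$; but it gives \emph{no} control over the image in $\ArbCarrierAa\times A_2$ beyond containment in the pullback of~$\ArbRelA_2$. In the internal language: strong collection applied to ``$\forall a_1\in S_1.\,\exists a_2\in S_2.\,f(a_1)=g(a_2)$'' produces a small $T_1$ with $\pi_1(T_1)=S_1$ and $\pi_2(T_1)\subseteq S_2$, not $\pi_2(T_1)=S_2$. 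Your claimed conclusion that both images are recovered does not follow.

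The paper's repair is to apply collection a second time, symmetrically, obtaining $T_2$ with $\pi_2(T_2)=S_2$ and $\pi_1(T_2)\subseteq S_1$, and then to take the \emph{union} $T_1\cup T_2$. This union is where extensivity and axioms~(A4)--(A5) actually enter: they guarantee that finite joins of small subobjects exist (via $1+1\to 1$ small and closure under sums and quotients), so that $T_1\cup T_2$ is again small. Your proposal misplaces the role of extensivity, attributing it to image calculations rather than to the formation of this union.
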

\begin{proofof}{\propref{prop:astwpp}}
  I will sketch proofs in the set-theoretic notation.
  Translation into categorical language is 
  straightforward.

  For \itemref{astwpp:one}, consider a pullback square, and its image under $\SPow\SmallS$.
  \[
  \xymatrix{
    f^{-1}(Z')\pb{-45}\ar[d]\ar@{ >->}[r]&A\ar[d]^f
    \\
    Z'\ar@{ >->}[r]&Z
  }
  \qquad\qquad
  \xymatrix{
    \SPow\SmallS(f^{-1}(Z'))\ar[d]\ar[r]&\SPow\SmallS(A)\ar[d]^{\SPow\SmallS f}
    \\
    \SPow\SmallS(Z')\ar[r]&\SPow\SmallS(Z)
  }
  \]
  To see that the right-hand square is a pullback, consider
  $S$ in $\SPow\SmallS(A)$, for which the direct image
  $\SPow\SmallS f(S)$ is in $\SPow\SmallS(Z')$;
  then, by definition, $S$ is in $\SPow\SmallS (f^{-1}(Z'))$.

  For items~(\ref{astwpp:two}) and (\ref{astwpp:three}), consider a cospan~${(A_1\rightarrow Z\leftarrow A_2)}$.
  For \itemref{astwpp:two}, we must exhibit a section
  ${s:\SPow\SmallS(A_1)\times_{\SPow\SmallS(Z)}\SPow\SmallS(A_2)\to \SPow\SmallS(A_1\times_Z A_2)}$ of 
  the canonical morphism. For $(S_1,S_2)$ in $\SPow\SmallS(A_1)\times_{\SPow\SmallS(Z)}\SPow\SmallS(A_2)$,
  note that 
  \[
  \forall a_1\in S_1.\ \exists a_2 \in S_2.\ f(a_1)=g(a_2)\quad
  \text{and}\quad
  \forall a_2\in S_2.\ \exists a_1 \in S_1.\ f(a_1)=g(a_2)
  \]
  and let
  \[s(S_1,S_2) = \{ (a_1,a_2) \in (S_1\times S_2)~|~ f(a_1) = g(a_2)\}
  \quad\text.\]
  Here we have used the separation axiom, which is valid 
  when all monomorphisms are small.

  For \itemref{astwpp:three}, we show that the canonical morphism
  \[{\SPow\SmallS(A_1\times_Z A_2)\to \SPow\SmallS(A_1)\times_{\SPow\SmallS(Z)}\SPow\SmallS(A_2)}\]
  is a cover.
  Consider $(S_1,S_2)$ in $\SPow\SmallS(A_1)\times_{\SPow\SmallS(Z)}\SPow\SmallS(A_2)$;
  we must show that there is $S$ in $\SPow\SmallS(A_1\times_Z A_2)$
  whose direct image is $(S_1,S_2)$.
  
  By the (strong) collection axiom, we have 
  $T_1$ in $\SPow\SmallS(A_1\times_Z A_2)$ 
  such that 
  ${\pi_1(T_1)=S_1}$ and 
  ${\pi_2(T_1)\subseteq S_2}$. 
  Similarly, we also have 
  $T_2$ in $\SPow\SmallS(A_1\times_Z A_2)$ such that
  ${\pi_2(T_2)=S_2}$ and 
  ${\pi_1(T_2)\subseteq S_1}$.
  Thus $(T_1\cup T_2)$ is in $\SPow\SmallS(A_1\times_Z A_2)$,
  and its direct image is $(S_1,S_2)$, as required.
  Thus \propref{prop:astwpp} is proved.
\end{proofof}

The free semi-lattice functor $\SetPowF$ on any topos covers
pullbacks, but will only preserve weak pullbacks if the 
topos is Boolean.
In general, the corresponding class of small maps 
does not contain all monomorphisms,
as Johnstone et al.~\cite[Ex.~1.4]{jptww-catcoalg} have observed.
The counterexample of \cite{jptww-catcoalg}
is easily adapted to the 
settings of the presheaf categories for name passing,
correcting oversights in
\cite{ft-namepassing,gyv-open,staton-thesis}.

On the other hand,  
the endofunctor $\SetPowPF$ 
on the presheaf category $\PshfFne$
does preserve weak pullbacks.

The compact-subspace endofunctor $K$ on Stone spaces
covers pullbacks (a consequence of \propref{prop:astwpp})
although it does not preserve weak pullbacks,
as observed by Bezhanishvili et al~\cite{bfv-vietoris}.

The probability distribution functor $\DistF$ on $\Set$ preserves
weak pullbacks~\cite{m-coalglogic,vr-bisimprob}.

More sophisticated continuous settings are problematic. 
Counterexamples to the weak-pullback-preservation of 
probability distributions on measurable spaces 
are discussed in~\cite{v-finalmeas}. 
Plotkin~\cite{p-bialgrec} discusses problems with 
coalgebraic bisimulation in categories of domains:
the convex powerdomain does not even preserve monomorphisms.
The endofunctors on posets that Levy
considers~\cite{levy-sim} 
typically do not preserve monomorphisms either.

\section{Relating the notions of bisimulation}
\label{sec:relatcat}

\noindent The purpose of this section is to relate the four notions of
bisimulation introduced in \secref{sec:notions}.

\begin{thm}Let $\BehA$ be an
  endofunctor on a category $\CatA$ with finite limits and images.
\label{thm:relate}
\begin{enumerate}[\em(1)]
\item\label{relate:one}
Every \AMbisimtext\ is an \HJbisimtext.
\item\label{relate:two}
Every \HJbisimtext\ is an \AMprecong.
\item\label{relate:three}
Every \AMprecong\ is contained in a kernel bisimulation that is an
\AMprecong, provided $\CatA$ has pushouts.
\item\label{relate:four}
Every kernel bisimulation is an \AMbisimtext, provided
$\BehA$ preserves weak pullbacks.
\item\label{relate:five}
Every kernel bisimulation is an \HJbisimtext, provided
$\BehA$ covers pullbacks.
\item\label{relate:six}
Every kernel bisimulation is an \AMprecong, provided
$\BehA$ preserves pullbacks along monos and $\CatA$ is regular.
\item\label{relate:seven}
Every \HJbisimtext\ is an \AMbisimtext, provided either
\begin{enumerate}[(i)]
\item \label{relate:sevena} every epi in $\CatA$ is split, or 
\item \label{relate:sevenb} $\BehA$ preserves relations, or
\item \label{relate:sevenc}$\CatA$ is regular with a class~$\SmallS$ of open maps
  containing all monomorphisms, and 
  there is an \spowerset\SmallS\ $\SPow\SmallS$,
  and $\BehA(-)\iso \SPow\SmallS (\BehA'(-))$, for a relation preserving functor $\BehA'$.
\item \label{relate:sevend} $\CatA$ is a topos and $\BehA\iso P(\BehA'(-))$, where $P$ is the
  powerobject of $\CatA$ and $\BehA'$ is an arbitrary endofunctor.
\end{enumerate}
\end{enumerate}
\end{thm}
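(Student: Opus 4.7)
The plan is to move between the coalgebra-structure viewpoint (\AMbisimtext s and kernel bisimulations) and the relation-lifting viewpoint (\HJbisimtext s and \AMprecong s), using the canonical cover $\ArbBehA\RelA\twoheadrightarrow\LiftedBehA\RelA$ that arises from the image factorization of $\ArbBehA\RelA\to\ArbBehA\ArbCarrierA\times\ArbBehA\ArbCarrierB$ as the main bridge; throughout I write $\pi_1,\pi_2$ for the legs of the span $\ArbCarrierA\leftarrow\RelA\rightarrow\ArbCarrierB$. Clauses~(\ref{relate:one}) and~(\ref{relate:two}) are direct: for~(\ref{relate:one}), postcomposing a span-lifting $r\colon\RelA\to\ArbBehA\RelA$ with the cover yields $\RelA\to\LiftedBehA\RelA$; for~(\ref{relate:two}), a cospan $\ArbCarrierA\xrightarrow\poinA\SetZ\xleftarrow\poinB\ArbCarrierB$ commuting on $\RelA$ makes the two composites $\ArbBehA\ArbCarrierA\times\ArbBehA\ArbCarrierB\rightrightarrows\ArbBehA\SetZ$ through $\ArbBehA\poinA$ and $\ArbBehA\poinB$ agree after pulling back to $\ArbBehA\RelA$, hence agree on $\LiftedBehA\RelA$ since covers are epic, yielding the \AMprecong\ condition after precomposing with $\RelA\to\LiftedBehA\RelA$.

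For~(\ref{relate:three}), take the pushout $\ArbCarrierA\xrightarrow\poinA\SetZ\xleftarrow\poinB\ArbCarrierB$ of $\RelA$. The \AMprecong\ property applied to this cospan coequalizes $\ArbBehA\poinA\circ\CoalgA$ and $\ArbBehA\poinB\circ\CoalgB$ on $\RelA$, so by the pushout property there is a unique $\CoalgC\colon\SetZ\to\ArbBehA\SetZ$ making $\poinA,\poinB$ coalgebra homomorphisms; then $\RelAa:=\ArbCarrierA\times_\SetZ\ArbCarrierB$ is a kernel bisimulation containing $\RelA$. To see $\RelAa$ is itself an \AMprecong, any cospan $(\poinA',\poinB')$ agreeing on $\RelAa$ also agrees on $\RelA$ and therefore factors through the pushout as $\poinA'=t\circ\poinA$, $\poinB'=t\circ\poinB$; using $\poinA\circ\pi_1=\poinB\circ\pi_2$ on $\RelAa$ together with the homomorphism equations for $\poinA,\poinB$, one computes $\ArbBehA\poinA'\circ\CoalgA\circ\pi_1=\ArbBehA t\circ\CoalgC\circ\poinA\circ\pi_1=\ArbBehA t\circ\CoalgC\circ\poinB\circ\pi_2=\ArbBehA\poinB'\circ\CoalgB\circ\pi_2$ on $\RelAa$.

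For~(\ref{relate:four})--(\ref{relate:six}), any kernel bisimulation can be written as $\RelA=\ArbCarrierA\times_\ArbCarrierC\ArbCarrierB$ coming from coalgebra homomorphisms $\poinA,\poinB$ into some $(\ArbCarrierC,\CoalgC)$; the homomorphism equations make $(\CoalgA\circ\pi_1,\CoalgB\circ\pi_2)$ factor through the pullback $\ArbBehA\ArbCarrierA\times_{\ArbBehA\ArbCarrierC}\ArbBehA\ArbCarrierB$. If $\ArbBehA$ preserves weak pullbacks, the canonical $\ArbBehA\RelA\to\ArbBehA\ArbCarrierA\times_{\ArbBehA\ArbCarrierC}\ArbBehA\ArbCarrierB$ has a section and composition yields a coalgebra structure lifting the span, proving~(\ref{relate:four}); if $\ArbBehA$ only covers pullbacks, the canonical map is a cover and its image in $\ArbBehA\ArbCarrierA\times\ArbBehA\ArbCarrierB$ coincides with $\LiftedBehA\RelA$, yielding a map $\RelA\to\LiftedBehA\RelA$ for~(\ref{relate:five}). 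For~(\ref{relate:six}), given a cospan $(\poinA',\poinB')$ agreeing on $\RelA$, I factor $\poinA',\poinB'$ as cover-then-mono in the regular category $\CatA$ and use preservation of pullbacks along monos to push the identity $\ArbBehA\poinA\circ\CoalgA\circ\pi_1=\CoalgC\circ\poinA\circ\pi_1=\CoalgC\circ\poinB\circ\pi_2=\ArbBehA\poinB\circ\CoalgB\circ\pi_2$ through the monic parts into $\SetZ'$.

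For~(\ref{relate:seven}), the task in each case is to lift $\RelA\to\LiftedBehA\RelA$ through the cover $\ArbBehA\RelA\twoheadrightarrow\LiftedBehA\RelA$ to obtain the needed $\RelA\to\ArbBehA\RelA$: case~(\ref{relate:sevena}) is immediate since the cover splits; case~(\ref{relate:sevenb}) is immediate since relation preservation forces the cover to be an isomorphism; case~(\ref{relate:sevenc}) uses that $\ArbBehA(-)\iso\SPow\SmallS(\ArbBehAa(-))$ together with (M) to recognize $\LiftedBehAa\RelA\monoto\ArbBehAa\ArbCarrierA\times\ArbBehAa\ArbCarrierB$ as an \srelation\SmallS, which corresponds under the universal property of $\SPow\SmallS$ to the required map; case~(\ref{relate:sevend}) is analogous, with the topos powerobject replacing $\SPow\SmallS$. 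The main obstacles I anticipate are clause~(\ref{relate:six}), where propagating the kernel-pair equality through the cover-mono factorization must use regularity carefully, and clauses~(\ref{relate:sevenc})--(\ref{relate:sevend}), which demand a faithful translation between the algebraic-set-theoretic framework (respectively, the topos internal language) and an explicit section of $\ArbBehA\RelA\twoheadrightarrow\LiftedBehA\RelA$.
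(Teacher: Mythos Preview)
Your treatment of items~(\ref{relate:one})--(\ref{relate:five}), (\ref{relate:sevena}), and (\ref{relate:sevenb}) is correct and essentially identical to the paper's. For item~(\ref{relate:three}) your direct verification that $\RelAa$ is an \AMprecong---factoring an arbitrary test cospan through the pushout $\SetZ$ and using that $\poinA,\poinB$ are homomorphisms---is a clean alternative to the paper's terser claim that the pushout of $\RelAa$ is again $\SetZ$.

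The real gap is item~(\ref{relate:six}). You propose to factor the \emph{test} cospan $(\poinA',\poinB')$, but the kernel target $\ArbCarrierC$ and the test target are a priori unrelated, and factoring $\poinA',\poinB'$ gives no leverage for transporting the identity $\BehA\poinA\circ\CoalgA\circ\pi_1=\BehA\poinB\circ\CoalgB\circ\pi_2$ (which lives over $\BehA\ArbCarrierC$) to the test target. The paper's argument is quite different: it factors the \emph{span legs} $\pi_1,\pi_2$ and the \emph{kernel} maps $\poinA,\poinB$ through their images, subdividing the pullback square $\RelA=\ArbCarrierA\times_{\ArbCarrierC}\ArbCarrierB$ into four smaller pullback squares around an auxiliary object $W$. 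The decisive step---and the only place regularity is genuinely used---is that the square with corners $\RelA$, $\mathrm{im}(\pi_1)$, $\mathrm{im}(\pi_2)$, $W$ is not only a pullback but also a \emph{pushout}; this is a nontrivial fact about regular categories. Preservation of pullbacks along monos then yields factorizations $\RelA\to\BehA(\mathrm{im}\,\pi_1)$ and $\RelA\to\BehA(\mathrm{im}\,\pi_2)$; the pushout property of $W$ manufactures a map $W\to\SetZ'$ from the test cospan, and $\BehA$ of that map closes the diagram. None of this mechanism is present in your sketch. A smaller but similar issue affects~(\ref{relate:sevenc}): the relation you name, $\LiftedBehAa\RelA\monoto\BehAa\ArbCarrierA\times\BehAa\ArbCarrierB$, transposes under $\SPow\SmallS$ to a map $\BehAa\ArbCarrierA\to\SPow\SmallS(\BehAa\ArbCarrierB)$, not to the required section $\LiftedBehA\RelA\to\SPow\SmallS(\BehAa\RelA)$. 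The latter corresponds to an \srelation\SmallS\ from $\LiftedBehA\RelA$ to $\BehAa\RelA$, and the paper constructs it as a \emph{composite} relation passing through the membership relations ${\ni_{\BehAa\ArbCarrierA}}\times{\ni_{\BehAa\ArbCarrierB}}$, which bridge $\BehA\ArbCarrierA\times\BehA\ArbCarrierB=\SPow\SmallS(\BehAa\ArbCarrierA)\times\SPow\SmallS(\BehAa\ArbCarrierB)$ down to $\BehAa\ArbCarrierA\times\BehAa\ArbCarrierB$; axiom~(M) is invoked only to make the final leftward leg (the mono $\BehAa\RelA\monoto\BehAa\ArbCarrierA\times\BehAa\ArbCarrierB$) small. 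Your sketch omits this bridge.
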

In summary:
\[\xymatrix@R-40pt@C+15pt{
*+<5pt>{\txt{\framebox{AM-bisim.}}}
\ar[r]^{(1)}
&
*+<5pt>{\txt{\framebox{HJ-bisim.}}}
\ar[r]^(.45){(2)}
\ar@/^1cm/[l]^{\text{(7i)--(7iv)}}
&
*+<12pt>{\txt{\framebox{AM-precong.}}\ }
\ar@{^{(}->}[r]^{(3)}
&
*+<8pt>{\txt{\framebox{Kernel bisim.}}}
\ar@/_1.5cm/[lll]^{\text{pres. weak p'backs}}_{(4)}
\ar@/^1.9cm/[ll]_{\text{cover p'backs}}^{(5)}
\ar@/^.5cm/[l]^(.6){\txt{\tiny pres. p'backs of monos}}_{(6)}
}\]

Note that the different notions of bisimulation are not, in general, the same.
For instance, Aczel and Mendler~\cite[p.~363]{am-finalcoalg} provide 
an example of an endofunctor on $\Set$ for which there is an
\AMprecong\ 
that is not an \AMbisimtext.
Bezhanishvili et al.~\cite[Sec.~4]{bfv-vietoris} demonstrate
that \AMbisimtext\ is different from 
\HJbisimtext\ for the Vietoris construction on Stone spaces.

\subsection{Proof of \thmref{thm:relate}}
Throughout the proof, we fix two \gcoalgtext\ArbBehA s,
${\CoalgA:\ArbCarrierA\to\ArbBehA\ArbCarrierA}$
and
${\CoalgB:\ArbCarrierB\to\ArbBehA\ArbCarrierB}$.

\Itemref{relate:one} is trivial. 

For \itemref{relate:two}, let $\RelA$ be an \HJbisimtext.
Let $(\SetA\to\SetZ\leftarrow\SetB)$ be a cone over 
the span $(\SetA\leftarrow\RelA\to\SetB)$.
Consider the following commuting diagrams.
\[
\text{(a)}\ 
\xymatrix@R-15pt{
&\SetA\ar[r]^(.45)\CoalgA&\BehA\SetA\ar[dr]
\\
\RelA\ar[ur]
\ar[dr]
\ar[r]
&
\LiftedBehA\RelA
\ar[ur]\ar[dr]
&&\BehA\SetZ
\\
&\SetB\ar[r]_(.45)\CoalgB&\BehA\SetB\ar[ur]}
\qquad
\text{(b)}\ 
\xymatrix@R-15pt{
&\BehA\SetA\ar[dr]
\\
\BehA\RelA
\ar[ur]\ar[dr]
&&\BehA\SetZ
\\
&\BehA\SetB\ar[ur]}
\]
The left-hand squares of diagram~(a) say that $\RelA$ is an \HJbisimtext.
The right-hand square of diagram~(a) commutes since 
diagram~(b) commutes, and 
$\BehA\RelA\coverto\LiftedBehA\RelA$
is epi. Thus the whole of~(a) commutes, and $\RelA$ is an \AMprecong.

For \itemref{relate:three}, let $\RelA$ be an \AMprecong. Let 
$(\SetA\to\SetZ\leftarrow\SetB)$ be the pushout 
of the span $(\SetA\leftarrow\RelA\to\SetB)$.
The following diagram commutes; the dotted morphism 
follows from universality of the pushout.
\[
\xymatrix@R-15pt{
&\SetA\ar[r]^(.45)\CoalgA\ar[dr]&\BehA\SetA\ar[dr]
\\
\RelA\ar[ur]
\ar[dr]
&&
\pb{180}
\SetZ\ar@{..>}[r]
&\BehA\SetZ
\\
&\SetB\ar[r]_(.45)\CoalgB\ar[ur]&\BehA\SetB\ar[ur]}
\]
Let 
$(\SetA\leftarrow\RelAa\to\SetB)$
be the pullback of $(\SetA\to\SetZ\leftarrow\SetB)$.
By definition, it is a kernel bisimulation.
Moreover, the pushout of 
$(\SetA\leftarrow\RelAa\to\SetB)$
is~$\SetZ$ again, so~$\RelAa$ is an \AMprecong.

For items~(\ref{relate:four}), (\ref{relate:five}) and (\ref{relate:six}),
let $\RelA$ be a kernel bisimulation, the pullback
of a cospan ${(\ArbCarrierA\to\ArbCarrierC\leftarrow\ArbCarrierB})$, 
for some coalgebra $(\ArbCarrierC,\CoalgC)$.
Note that the following diagram commutes.
\begin{equation}
\label{dgm:kerneltoothers}
\xymatrix@R-15pt{
&\ArbCarrierA\ar[r]^\CoalgA&\BehA\ArbCarrierA\ar[dr]
\\
\RelA\ar[ur]\ar[dr]
&&&\BehA\ArbCarrierC
\\
&\ArbCarrierB\ar[r]^\CoalgB&\BehA\ArbCarrierB\ar[ur]
}
\end{equation}
For \itemref{relate:four}, we must show that $\RelA$ is an \AMbisimtext. 
We construct a coalgebra structure on $\RelA$ by
considering the morphism $\RelA\to\BehA\RelA$ induced 
since $\BehA\RelA$ is a weak pullback, as in the following diagram.
\[
\xymatrix@R-15pt{
&\ArbCarrierA\ar[r]^\CoalgA&\BehA\ArbCarrierA\ar[dr]
\\
\RelA\ar[ur]\ar[dr]\ar@{..>}[r]
&\BehA\RelA\ar[ur]\ar[dr]&&\BehA\ArbCarrierC
\\
&\ArbCarrierB\ar[r]^\CoalgB&\BehA\ArbCarrierB\ar[ur]
}
\]

For \itemref{relate:five}, we must show that $\RelA$ is an \HJbisimtext. This follows from the following fact, which is immediate from the definition
of $\LiftedBehA\RelA$, and which is worth recording:
\begin{fact}
\label{fact:coverpbacks}
  If $\BehA$ covers pullbacks, 
  then
  \[
  \xymatrix@R-20pt@C-10pt{
    &&\SetA\ar[dr]&&&    &\BehA\SetA\ar[dr]
    \\
    \text{if }&
    \RelA\ar[ur]\ar[dr]
    &&
    \SetZ
    &  \text{ is a pullback, so is }
    &\LiftedBehA\RelA\ar[ur]\ar[dr]
    &&
    \BehA\SetZ&\text.
    \\
    &&\SetB\ar[ur]&&&    &\BehA\SetB\ar[ur]
  }
  \] 
%
\end{fact}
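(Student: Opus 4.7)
The plan is to reduce the claim to a direct cover--monomorphism factorization argument, exploiting the hypothesis that $\BehA$ covers pullbacks.

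First, I would fix notation: let $P$ denote the pullback of the cospan $\BehA \SetA \to \BehA \SetZ \leftarrow \BehA \SetB$ in $\CatA$. Because any pullback has jointly monic projections (two maps into a pullback that agree on both legs must coincide), the canonical pairing $P \to \BehA \SetA \times \BehA \SetB$ is a monomorphism, so $P$ itself is a relation in $\BehRPOrder$.

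Second, I would show that the morphism defining $\LiftedBehA \RelA$ factors through $P$. Applying $\BehA$ to the commuting square witnessing $\RelA$ as a pullback yields a commuting (not necessarily pullback) square in $\CatA$, so the two composites $\BehA \RelA \to \BehA \SetA \to \BehA \SetZ$ and $\BehA \RelA \to \BehA \SetB \to \BehA \SetZ$ agree. By the universal property of $P$, the morphism $\BehA \RelA \to \BehA \SetA \times \BehA \SetB$ factors as $\BehA \RelA \to P \monoto \BehA \SetA \times \BehA \SetB$.

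Third, the mediating map $\BehA \RelA \to P$ is precisely the comparison morphism from the image of the pullback to the pullback of the image. The hypothesis that $\BehA$ covers pullbacks says exactly that this comparison is a cover. Hence $\BehA \RelA \coverto P \monoto \BehA \SetA \times \BehA \SetB$ is a cover--mono factorization of the defining morphism of $\LiftedBehA \RelA$, and by uniqueness of image factorizations in a regular category, $\LiftedBehA \RelA \iso P$ as subobjects of $\BehA \SetA \times \BehA \SetB$. The projections match, and this is precisely the pullback square claimed in the right-hand diagram.

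The main obstacle, such as it is, is simply avoiding the temptation to assume additional hypotheses. In particular, one does \emph{not} need $\BehA$ to preserve relations: $P$ is automatically a relation by virtue of being a pullback, so the cover-mono factorization lands inside $\BehA \SetA \times \BehA \SetB$ through $P$ without any separate relation-preservation input. This is why, as the paper suggests, the result is immediate from unpacking the definition of $\LiftedBehA$.
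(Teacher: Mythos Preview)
Your argument is correct and is exactly what the paper intends by ``immediate from the definition of $\LiftedBehA\RelA$'': the comparison map $\BehA\RelA\to P$ is a cover by hypothesis, $P\monoto\BehA\SetA\times\BehA\SetB$ is monic because pullback projections are jointly monic, and the resulting cover--mono factorization identifies $\LiftedBehA\RelA$ with $P$. One small remark: you invoke uniqueness of image factorizations \emph{in a regular category}, but in fact this uniqueness follows already from the universal property of images (covers, being morphisms whose image is the identity, are left orthogonal to monomorphisms), so the argument goes through under the weaker hypotheses of Theorem~\ref{thm:relate} (finite limits and images) without needing regularity.
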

\newcommand{\SetV}{V}%
\newcommand{\SetVv}{V'}%

\newcommand{\RelBb}{S'}
\newcommand{\MapA}{f}
\newcommand{\MapAa}{f'}
\newcommand{\MapB}{g}
\newcommand{\MapBb}{g'}
\newcommand{\ProjA}{p}
\newcommand{\ProjAa}{p'}
\newcommand{\ProjB}{q}
\newcommand{\ProjBb}{q'}
\newcommand{\imMapA}{\mathrm{im}(f)}
\newcommand{\imMapB}{\mathrm{im}(g)}
\newcommand{\imProjA}{\mathrm{im}(p)}
\newcommand{\imProjB}{\mathrm{im}(q)}
For \itemref{relate:six}, we must show that $\RelA$ is an \AMprecong. 
This is more involved.
We make use of the following lemma.
\begin{lem}
Consider objects $\RelB$, $\RelBb$, $\SetV$, $\SetVv$, and morphisms
$p,q,p',q',f,g,f',g'$, making the following three diagrams 
commute.
\[
\xymatrix@R-20pt{
&\SetA\ar[dr]^f&
&
&\SetA\ar[r]^\CoalgA&\BehA\SetA\ar[dr]^{\BehA f}&&&\SetA\ar[dr]^{f'}
\\
\RelB\ar[ur]^p\ar[dr]_q\ar@{}[rr]|-{\text{(a)}}&&\SetV
&
\RelBb\ar[ur]^{p'}\ar[dr]_{q'}\ar@{}[rrr]|-{\text{(b)}}&&&\BehA\SetV
&
\RelB\ar[ur]^p\ar[dr]_q\ar@{}[rr]|-{\text{(c)}}&&\SetVv
\\
&\SetB\ar[ur]_g&
&
&\SetB\ar[r]_\CoalgB&\BehA\SetB\ar[ur]_{\BehA g}&&&\SetB\ar[ur]_{g'}
}\]
If the left-hand diagram is a pullback, and $\BehA$ preserves pullbacks
along monos, and $\CatA$ is regular, 
then the following diagram also commutes.
\[
\xymatrix@R-20pt{
&\SetA\ar[r]^\CoalgA&\BehA\SetA\ar[dr]^{\BehA f'}
\\
\RelBb\ar[ur]^{p'}\ar[dr]_{q'}\ar@{}[rrr]|-{\text{(d)}}&&&\BehA\SetVv
\\
&\SetB\ar[r]_\CoalgB&\BehA\SetB\ar[ur]_{\BehA g'}
}\]
\label{lemma:amcong}
\end{lem}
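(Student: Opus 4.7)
Let $T$ denote the pullback of $\BehA\SetA \xrightarrow{\BehA f} \BehA\SetV \xleftarrow{\BehA g} \BehA\SetB$ in $\CatA$, and $T'$ that of $(\BehA f',\BehA g')$; both sit as monomorphisms inside $\BehA\SetA \times \BehA\SetB$. Hypothesis~(b) says exactly that $\langle\CoalgA\cdot p',\,\CoalgB\cdot q'\rangle\colon \RelBb\to \BehA\SetA\times\BehA\SetB$ factors through $T$, and the desired~(d) is equivalent to the same morphism factoring through $T'$. Thus it suffices to prove $T\le T'$ as subobjects of $\BehA\SetA\times\BehA\SetB$, equivalently, that the two composites $\BehA f'\pi_1,\BehA g'\pi_2\colon T\to \BehA\SetVv$ coincide.

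The plan is to cover $T$ by $\BehA\RelB$ and then cancel. Since $fp=gq$ by~(a), the morphism $\langle\BehA p,\BehA q\rangle\colon \BehA\RelB\to \BehA\SetA\times\BehA\SetB$ factors canonically through $T$ via some $\phi\colon \BehA\RelB\to T$. By functoriality and~(c),
\[
\BehA f'\pi_1\cdot\phi \;=\; \BehA(f'p) \;=\; \BehA(g'q) \;=\; \BehA g'\pi_2\cdot\phi.
\]
Once $\phi$ is shown to be a cover (regular epi), cover-cancellation in the regular category~$\CatA$ immediately gives $\BehA f'\pi_1=\BehA g'\pi_2$ on $T$, so $T\le T'$ and~(d) follows.

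Verifying that $\phi$ is a cover is the heart of the lemma. I would reduce it to a comparison of a pullback of covers, via image factorisation in $\CatA$. Write $f=m_fe_f$ and $g=m_ge_g$ with $m_f,m_g$ monic and $e_f,e_g$ covers, and form the intersection of images $I:=I_f\times_\SetV I_g\hookrightarrow \SetV$ (monic). Set $A_1:=\SetA\times_{I_f}I$ and $B_1:=\SetB\times_{I_g}I$: these are pullbacks along the monos $I\hookrightarrow I_f,\,I_g$, so $\BehA$ preserves them, and the projections $A_1\twoheadrightarrow I$, $B_1\twoheadrightarrow I$ are covers (pullbacks of $e_f,e_g$ along monos, stable in a regular category). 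A short chase identifies $\RelB = A_1\times_I B_1$; two further applications of preservation of pullbacks along monos give $\BehA A_1 = \BehA\SetA\times_{\BehA I_f}\BehA I$, $\BehA B_1 = \BehA\SetB\times_{\BehA I_g}\BehA I$ and thence $T = \BehA A_1\times_{\BehA I}\BehA B_1$. Under these identifications, $\phi$ is the canonical comparison $\BehA(A_1\times_I B_1)\to \BehA A_1\times_{\BehA I}\BehA B_1$ of $\BehA$ on the pullback of two covers over $\BehA I$.

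The main obstacle is showing that this last comparison is a cover, since preservation of pullbacks along monos does not directly control the behaviour of $\BehA$ on covers. The intended argument is by generalised elements: given $(u,v)\colon X\to \BehA A_1\times_{\BehA I}\BehA B_1$, image factorisations of $u,v$ inside $\BehA I_f,\BehA I_g$ (monic in $\BehA\SetV$ by the hypothesis on $\BehA$) reduce matters to an intersection argument inside $\BehA I=\BehA I_f\cap \BehA I_g$; one more appeal to preservation of pullbacks along monos, combined with pullback-stability of covers in the regular category $\CatA$, then produces a cover $X'\twoheadrightarrow X$ together with a lift $X'\to \BehA\RelB$ compatible with $(u,v)$. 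This exhibits $\phi$ as a cover and concludes the proof.
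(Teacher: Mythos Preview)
Your setup is sound and actually mirrors the paper's: your $I$, $A_1$, $B_1$ are exactly the paper's $W$, $\mathrm{im}(p)$, $\mathrm{im}(q)$, and your identifications $S=A_1\times_I B_1$ and $T=\BehA A_1\times_{\BehA I}\BehA B_1$ are correct. The reduction ``$T\le T'$ suffices'' is also fine, and indeed the paper's argument establishes precisely $T\le T'$.

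The gap is the claim that $\phi\colon \BehA S\to T$ is a cover. After your reductions, $\phi$ is the comparison map $\BehA(A_1\times_I B_1)\to \BehA A_1\times_{\BehA I}\BehA B_1$ for a pullback of \emph{covers} $A_1\twoheadrightarrow I\twoheadleftarrow B_1$. Preservation of pullbacks along monos says nothing about this square; asking that such comparisons be covers is exactly ``$\BehA$ covers pullbacks of covers''. In particular, instantiating the lemma with $X=Y$ and $f=g$ makes $A_1=B_1$ and the square a kernel pair, so your argument would force $\BehA$ to cover kernel pairs. But the paper records (end of \S\ref{sec:equiv}) that in an extensive regular category, covering pullbacks is \emph{equivalent} to covering kernel pairs together with preserving pullbacks along monos; since these conditions are listed as strictly decreasing in strength, there are functors with the mono-pullback property that fail to cover kernel pairs. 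Hence $\phi$ need not be a cover, and the ``generalised elements'' sketch cannot be completed as stated.

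The paper sidesteps this entirely. The crucial observation you are missing is that in a regular category the pullback square
\[
\xymatrix@R-6pt{
S\ar@{-|>}[r]\ar@{-|>}[d]&A_1\ar@{-|>}[d]\\
B_1\ar@{-|>}[r]&I
}
\]
is also a \emph{pushout}, because the composite $S\to I$ is a regular epi (this is \cite[Thm~5.2]{ckp-maltsev}). Diagram~(c) then yields a map $I\to V'$ directly. Now one never needs $\phi$: using only preservation of pullbacks along monos, factor $S'$ through $\BehA A_1$, $\BehA B_1$ and hence through $\BehA I$ (these three are iterated pullbacks along monos, exactly your computation of $T$), and compose with $\BehA(I\to V')$. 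Both legs of~(d) then agree because both equal $S'\to \BehA I\to \BehA V'$. So: keep your decomposition, drop the attempt to make $\phi$ epi, and use the pushout property of the inner square instead.
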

\begin{proofof}{\lemmaref{lemma:amcong}}
Write $\imMapA$ for the image of $\MapA:\SetA\to \SetV$, \etc.
Subdivide the pullback~(a) as follows.
\begin{equation}
\label{dgm:subdiv}
\xymatrix@R-24pt{
&&\SetA\ar@{-|>}[dr]
\\
&*+<8pt>{\imProjA}\pb{0}\ar@{ >->}[ur]
\ar@{-|>}[dr]
&&\imMapA\ar@{ >->}[dr]
\\
\RelB\ar@{-|>}[ur]\ar@{-|>}[dr]\pb{0}&&
*+<8pt>{W}\pb{180}\pb{0}
\ar@{ >->}[ur]
\ar@{ >->}[dr]
&& \SetV
\\
&*+<8pt>{\imProjB}\ar@{ >->}[dr]\pb{0}
\ar@{-|>}[ur]
&&\imMapB\ar@{ >->}[ur]
\\
&&\SetB\ar@{-|>}[ur]
}
\end{equation}
Since $\CatA$ is regular, 
the composite $\RelB\to W$ is regular epi. 
In this situation, 
the leftmost pullback is also a pushout, as indicated
(see e.g.~\cite[Thm~5.2]{ckp-maltsev}.)

Since diagram~(b) commutes, and since $\BehA$ preserves pullbacks along monos,
we have unique morphisms
${\RelBb\to\BehA(\imProjA)}$ and ${\RelBb\to \BehA(\imProjB)}$ making the following diagram commute. 
\[
\xymatrix@R-18pt{
  &\SetA\ar[rr]^\CoalgA&&\BehA\SetA\ar[dr]
  \\
  &&\BehA(\imProjA)\pbbb{0}\ar[dr]\ar[ur]&&\BehA(\imMapA)\ar[dr]
  \\
  \RelBb\ar[uur]^{\ProjAa}\ar[ddr]_{\ProjBb}
  \ar@{..>}[urr]\ar@{..>}[drr]&&&\BehA W\ar[ur]\ar[dr]
  \pb{0}&&\BehA\SetV
  \\
  &&\BehA(\imProjB)\pbbb{0}\ar[dr]\ar[ur]&&\BehA(\imMapB)\ar[ur]
  \\
  &\SetB\ar[rr]_\CoalgB&&\BehA\SetB\ar[ur]
}
\]
Now consider diagram~(c).
Since $W$ is a pushout, we have a unique morphism
$W\to\SetVv$ making the following diagram commute.
\[
\xymatrix@R-18pt{
&\imProjA\ar@{ >->}[r]\ar[dr]&\SetA\ar[dr]^{\MapAa}
\\
\RelB\ar@{-|>}[ur]\ar@{-|>}[dr]&&W\ar@{..>}[r]\pb{180}&\SetVv
\\
&\imProjB\ar@{ >->}[r]\ar[ur]&\SetB\ar[ur]_{\MapBb}}
\]
We can now conclude diagram~(d), by combining the 
previous two diagrams as follows.
\[
\xymatrix@R-18pt{
  &\SetA\ar[rr]^\CoalgA&&\BehA\SetA\ar[ddr]^{\BehA f'}
  \\
  &&\BehA(\imProjA)\ar[dr]\ar[ur]
  \\
  \RelBb\ar[uur]^{\ProjAa}\ar[ddr]_{\ProjBb}\ar[urr]\ar[drr]&&&
  \BehA W\ar[r]&\BehA\SetVv
  \\
  &&\BehA(\imProjB)\ar[dr]\ar[ur]
  \\
  &\SetB\ar[rr]_\CoalgB&&\BehA\SetB\ar[uur]_{\BehA g'}
}
\]
Thus \lemmaref{lemma:amcong} is proved.
\end{proofof}
Returning to the proof of \thmref{thm:relate}, 
notice that \itemref{relate:six} follows from \lemmaref{lemma:amcong},
in the case $S=S'=R$, $V=Z$. 

\Itemref{relate:seven} of \thmref{thm:relate} gives conditions under which 
every \HJbisimtext\ is an \AMbisimtext. 
The following fact is crucial here:
\begin{fact}
\label{fact:splitcover}
An \HJbisimtext\ $\RelA$ is an \AMbisimtext\
if the cover $\BehA\RelA\coverto\LiftedBehA\RelA$ is split. 
\qed
\end{fact}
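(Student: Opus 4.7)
The plan is to lift the HJ-witness along the given section. Let $\rho\colon \RelA \to \LiftedBehA\RelA$ denote the morphism witnessing $\RelA$ as an \HJbisimtext, let $e\colon \BehA\RelA \coverto \LiftedBehA\RelA$ denote the defining cover, and let $s\colon \LiftedBehA\RelA \to \BehA\RelA$ be a section with $e \circ s = \id$. I would propose the \gcoalgtext\BehA\ structure $r := s \circ \rho\colon \RelA \to \BehA\RelA$ on $\RelA$ and claim that, equipped with this structure, $\RelA$ becomes an \AMbisimtext.

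To verify this, I must show that the span legs $\RelA \to \ArbCarrierA$ and $\RelA \to \ArbCarrierB$ become \gcoalgtext\BehA\ homomorphisms, and by symmetry it suffices to handle the $\ArbCarrierA$-side. By the very definition of $\LiftedBehA\RelA$ as the image of $\BehA\RelA \to \BehA(\ArbCarrierA \times \ArbCarrierB) \to \BehA\ArbCarrierA \times \BehA\ArbCarrierB$, the induced map $\BehA\RelA \to \BehA\ArbCarrierA$ factors as $\BehA\RelA \xrightarrow{e} \LiftedBehA\RelA \to \BehA\ArbCarrierA$, where the second arrow is the restriction of the first projection. Using $e \circ s = \id$, the composite $\LiftedBehA\RelA \xrightarrow{s} \BehA\RelA \to \BehA\ArbCarrierA$ therefore reduces to that restricted projection. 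Composing on the left with $\rho$ and invoking the \HJbisimtext\ condition, which equates $\rho$ followed by the projection with $\CoalgA \circ (\RelA \to \ArbCarrierA)$, yields the required coalgebra-homomorphism square for $r$.

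There is no real obstacle here: splitness of the cover is exactly what is needed to translate the HJ-witness, which a priori lives at the level of the image $\LiftedBehA\RelA$, into a genuine $\BehA$-coalgebra structure on $\RelA$. The entire argument is a short diagram chase organised around the image factorisation $\BehA\RelA \coverto \LiftedBehA\RelA \monoto \BehA\ArbCarrierA \times \BehA\ArbCarrierB$ and the defining equation $e \circ s = \id$.
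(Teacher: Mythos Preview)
Your argument is correct and is precisely the straightforward verification the paper elides with its bare \qed: define the coalgebra structure as the section composed with the HJ-witness and check the homomorphism squares using $e\circ s=\id$ together with the image factorisation. There is nothing to compare, since the paper gives no proof beyond marking the fact as immediate.
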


Case (\ref{relate:sevena}), where all epis split, is thus trivial, 
and in case (\ref{relate:sevenb}), where $\BehA$ preserves relations, 
the cover is an isomorphism.
In cases (\ref{relate:sevenc}) and (\ref{relate:sevend}), we define a section 
${\LiftedBehA\RelA\to\BehA\RelA=\SPow\SmallS(\BehAa\RelA)}$ by defining 
the following composite relation
${(\LiftedBehA\RelA\leftarrow {\bullet}\to \BehAa\RelA)}$:
\[
\xymatrix@C-6pt@R-6pt{
&
\ar[dl]_=\LiftedBehA\RelA\ar[dr]
&&
{\ni_{\BehAa\SetA}}\times {\ni_{\BehAa\SetB}}
\ar[dr]\ar[dl]
&&
\BehAa\RelA
\ar[dl]
\ar[dr]^=
\\
\LiftedBehA\RelA
&&
\BehA\SetA\times\BehA\SetB
&&
\BehAa\SetA\times\BehAa\SetB
&&
\BehAa\RelA
}
\]
In case (\ref{relate:sevenc}), we must check that the composite relation is 
an \srelation\SmallS. By the axioms of open maps, 
it is sufficient to check that all 
the leftwards morphisms in the composite are in~$\SmallS$.
The right-most leftwards morphism is in~$\SmallS$ because~%
$\BehAa$ preserves relations, hence it is monic.

This concludes our proof of \thmref{thm:relate}.

\subsection{A note about equivalence relations}
\label{sec:equiv}
When $(\SetA,\CoalgA)=(\SetB,\CoalgB)$ then 
the maximal bisimulation, when it exists, is often 
an equivalence relation.
Some authors focus attention on those bisimulation relations that 
are equivalence relations. 
In this setting, it is reasonable to adjust the definition of
kernel bisimulation, so that the two coalgebra homomorphisms
$\SetA\to\SetZ$ are required to be equal.
An appropriate adjustment of \thmref{thm:relate} still holds
even when the pullback-preservation requirements are weakened as follows:
\begin{enumerate}[$\bullet$]
\item In \itemref{relate:three}, it is not necessary for $\CatA$ to have pushouts, 
  it is sufficient for $\CatA$ to have effective equivalence 
  relations (i.e., that every equivalence relation arises
  as a kernel pair);
\item In \itemref{relate:four}, 
  it is not necessary for $\BehA$ to preserve weak pullbacks,
  it is sufficient for $\BehA$ 
  to weakly preserve kernel pairs;
\item In \itemref{relate:five}, 
  it is not necessary for $\BehA$ to cover pullbacks,
  it is sufficient that $\BehA$ covers
  kernel pairs;
\item In \itemref{relate:six}, 
  it is not necessary for $\BehA$ to preserve pullbacks along monos,
  it is sufficient for $\BehA$ to 
  preserve monos. (In \dgmref{dgm:subdiv},
  if $f=g$ then ${W=\imMapA=\imMapB}$.)
\end{enumerate}
The conditions are connected: in an extensive regular category, a
functor covers pullbacks if and only if it covers kernel pairs and
preserves pullbacks along monos. (Gumm and Schr\"oder~\cite{gs-typecoalg} 
showed this for $\Set$, and their argument
is readily adapted to this more general setting.).

In summary, we have the following situation,
when focusing on equivalence relations:

\[\xymatrix@R-40pt@C+15pt{
*+<5pt>{\txt{\framebox{AM-bisim.}}}
\ar[r]^{}
&
*+<5pt>{\txt{\framebox{HJ-bisim.}}}
\ar[r]^(.45){}
\ar@{..>}@/^1cm/[l]^{}
&
*+<12pt>{\txt{\framebox{AM-precong.}}\ }
\ar@{^{(}->}[r]^{}
&
*+<8pt>{\txt{\framebox{Kernel bisim.}}}
\ar@/_1.5cm/[lll]^{\text{weakly pres. kernel pairs}}
\ar@/^1.9cm/[ll]_{\text{cover k.~pairs}}
\ar@/^.5cm/[l]^{\txt{\tiny pres. monos}}
}\]

\section{Constructing bisimilarity}
\label{sec:termrelseq}

\noindent In this section we consider a procedure for constructing
the maximal bisimulation.
We relate it with the terminal sequence,
which is used for finding
final coalgebras.

\subsubsection*{Context.}
In this section we 
assume that the ambient category~$\CatA$ is complete and regular.
We 
fix an endofunctor~$\ArbBehA$ 
on~%
$\CatA$, and fix two \gcoalgtext\ArbBehA s,
${\CoalgA:\ArbCarrierA\to\ArbBehA\ArbCarrierA}$
and
${\CoalgB:\ArbCarrierB\to\ArbBehA\ArbCarrierB}$.

\subsection{The relation refinement sequence}

The greatest \HJbisimtext s can be understood 
as greatest fixed points of the operator~%
$\RelOP$ on~$\RPOrder$.
We define an ordinal indexed cochain 
${(\RelSeqMapA\ArbOrdB\ArbOrdA:
  \RelSeqHJObjA\ArbOrdB\monoto \RelSeqHJObjA\ArbOrdA)_{\ArbOrdA\leq\ArbOrdB}}$
in $\RPOrder$,
in the usual way: 
\begin{enumerate}[$\bullet$]
\item \emph{Limiting case}:
  If $\ArbLimOrdA$ is limiting, then 
  let $\RelSeqHJObjA{\ArbLimOrdA}$ be
  $\bigcap_{\ArbOrdA<\ArbLimOrdA}\RelSeqHJObjA\ArbOrdA$,
  i.e.~the
  limit of 
  the cochain 
  ${(\RelSeqMapA\ArbOrdB\ArbOrdA:
  \RelSeqHJObjA\ArbOrdB\monoto 
  \RelSeqHJObjA\ArbOrdA)_{\ArbOrdA\leq\ArbOrdB<\ArbLimOrdA}}$.
  In particular, let $\RelSeqHJObjA{0}=\SetA\times\SetB$.
\item \emph{Inductive case}:
  let $\RelSeqHJObjA{\ArbOrdA+1}=\RelOP(\RelSeqHJObjA\ArbOrdA)$.
\end{enumerate}
We call this cochain the \emph{relation refinement 
sequence}. 
If this sequence is eventually stationary
then it achieves the maximal post-fixed
point of~$\RelOP$, the greatest \HJbisimtext.
(NB. The sequence always converges when $\RPOrder$ is
small, e.g. when $\CatA$ is well-powered.)

For the case of the endofunctor
${\SetPow(\CCSLabels\times(-))}$ on~$\Set$,
the relation refinement sequence is a transfinite extension 
of Milner's sequence
${\sim_0}\pgeq{\sim_1}\pgeq\dots\pgeq{\sim}$
(e.g.~\cite[\textcitesec~5.7]{m-ccsA}),
studied from an algorithmic perspective 
by Kanellakis and Smolka~\cite{ks-partref}.

If $\CatA$ has pushouts, we can also consider a cochain 
$(\RelSeqAMObjA\ArbOrdB\monoto \RelSeqAMObjA\ArbOrdA)_{\ArbOrdA\leq\ArbOrdB}$
corresponding to the operator~$\RelCOP$. As will be seen, the 
two sequences of relations often coincide. 

(The other notions of bisimulation, \AMbisimtext\ and kernel bisimulation,
cannot be 
characterized as post-fixed points, in general.)

\subsection{The terminal sequence}
\newcommand{\ArbTermSeqMapA}[2]{z_{#1,#2}}
\newcommand{\ArbTermSeqObjA}[1]{Z_{#1}}

There is another sequence that
is often studied in the coalgebraic setting.
The \emph{terminal sequence}
is an ordinal-indexed cochain 
\[{(\ArbTermSeqMapA\ArbOrdB\ArbOrdA:
\ArbTermSeqObjA\ArbOrdB
\to 
\ArbTermSeqObjA\ArbOrdA)_{\ArbOrdA\leq\ArbOrdB}}\]
that can be used to construct
a final coalgebra for an endofunctor~%
(e.g.~Worrell~\cite{w-finalseq}). 
The 
idea is to begin with the 
terminal object, and then successively find 
\emph{\galgtext\ArbBehA} structures, so that if the sequence
converges, i.e.\ the \galgtext\ArbBehA\ structure
is an isomorphism, then we have a
\emph{\gcoalgtext\ArbBehA}\ structure,
and indeed the final such. The cochain commutes and satisfies the following 
conditions:
\begin{enumerate}[$\bullet$]
\item \emph{Limiting case:}
$\ArbTermSeqObjA{\ArbLimOrdA}
=
\lim\setofcomp{\ArbTermSeqMapA\ArbOrdB\ArbOrdA:\ArbTermSeqObjA\ArbOrdB\to\ArbTermSeqObjA\ArbOrdA}{\,{\ArbOrdA\leq\ArbOrdB<\ArbLimOrdA}}$, if
$\ArbLimOrdA$ is limiting,
in which case the cone 
$\setofcomp{\ArbTermSeqMapA\ArbLimOrdA\ArbOrdA:
  \ArbTermSeqObjA\ArbLimOrdA\to\ArbTermSeqObjA\ArbOrdA}{\,\ArbOrdA<\ArbLimOrdA}$
is the limiting one;
\item \emph{Inductive case:}
$\ArbTermSeqObjA{\ArbOrdA+1}=
\BehA(\ArbTermSeqObjA{\ArbOrdA})$;
and 
$\ArbTermSeqMapA{\ArbOrdB+1}{\ArbOrdA+1}=
\BehA(\ArbTermSeqMapA\ArbOrdB\ArbOrdA):
\ArbTermSeqObjA{\ArbOrdB+1}\to
\ArbTermSeqObjA{\ArbOrdA+1}
$.
\end{enumerate}
\newcommand{\ArbFinalMapA}[1]{x_{#1}}
\newcommand{\ArbFinalMapB}[1]{y_{#1}}
\subsection{Relating the relation and terminal sequences}
The coalgebras 
$(\ArbCarrierA,\CoalgA)$,
$(\ArbCarrierB,\CoalgB)$
determine two cones 
\[(\ArbFinalMapA\ArbOrdA:
\ArbCarrierA\to \ArbTermSeqObjA\ArbOrdA)_\ArbOrdA
\qquad
(\ArbFinalMapB\ArbOrdA:
\ArbCarrierB\to \ArbTermSeqObjA\ArbOrdA)_\ArbOrdA
\]
over the terminal sequence. The first cone,~%
$(\ArbFinalMapA\ArbOrdA)_\ArbOrdA$ is 
given as follows.
\begin{enumerate}[$\bullet$]
\item \emph{Limiting case:} 
  If $\ArbLimOrdA$ is a limit ordinal
  then the morphisms
  ${\ArbFinalMapA{\ArbOrdA}:
  \ArbCarrierA\to\ArbTermSeqObjA{\ArbOrdA}}$
  for ${\ArbOrdA<\ArbLimOrdA}$
  form a cocone over the cochain 
  ${(\ArbTermSeqMapA{\ArbOrdB}{\ArbOrdA}:
  \ArbTermSeqObjA{\ArbOrdB}\to\ArbTermSeqObjA{\ArbOrdA})
  _{\ArbOrdA\leq\ArbOrdB<\ArbLimOrdA}}$, with apex
  $\ArbCarrierA$.
  We let 
  ${\ArbFinalMapA{\ArbLimOrdA}:\ArbCarrierA\to
  \ArbTermSeqObjA\ArbLimOrdA}$ 
  be the unique mediating morphism.
  For instance,~when $\ArbLimOrdA=0$,
  then 
  ${\ArbFinalMapA\ArbLimOrdA:
  \ArbCarrierA\to\ArbTermSeqObjA\ArbLimOrdA}$ 
  is the terminal map
  ${\ArbCarrierA\to 1}$.
\item 
  \emph{Inductive case:} 
  Let 
  $\ArbFinalMapA{\ArbOrdA+1}$ be the composite
  \[\ArbCarrierA
    \stackrel\CoalgA\longrightarrow
    \ArbBehA\ArbCarrierA
    \stackrel{\ArbBehA\ArbFinalMapA{\ArbOrdA}}\longrightarrow
    \ArbBehA\ArbTermSeqObjA{\ArbOrdA}
    =
    \ArbTermSeqObjA{\ArbOrdA+1}
  \quad\eqnstop\]
\end{enumerate}
The other cone, $(\ArbFinalMapB\ArbOrdA:
\ArbCarrierA\to \ArbTermSeqObjA\ArbOrdA)_\ArbOrdA$,
is defined similarly.

These cones determine another ordinal indexed cochain
${(\RelSeqObjA\ArbOrdB\monoto \RelSeqObjA\ArbOrdA)_{\ArbOrdA\leq\ArbOrdB}}$
in $\RPOrder$.
For every ordinal $\ArbOrdA$, let $\RelSeqObjA\ArbOrdA$ be the 
following pullback.
\[
\xymatrix{
  {\RelSeqObjA\ArbOrdA}\pbb{-45}
  \ar[r]
  \ar[d]
  &
  {\ArbCarrierB}
  \ar[d]^{\ArbFinalMapB\ArbOrdA}
  \\
  {\ArbCarrierA}
  \ar[r]_{\ArbFinalMapA\ArbOrdA}
  &
  {\ArbTermSeqObjA\ArbOrdA}
}
\]

\begin{thm}
  \label{thm:reltermseq}
Consider an ordinal $\ArbOrdA$.
\begin{enumerate}[\em(1)]
\item 
$\RelSeqObjA\ArbOrdA$ contains all the kernel bisimulations
and all the \AMprecong s.
\item 
If $\ArbBehA$ covers pullbacks,
then $\RelSeqObjA\ArbOrdA=\RelSeqHJObjA\ArbOrdA$.
\item 
If $\ArbBehA$ preserves pullbacks along monos, and $\CatA$ has pushouts,
then ${\RelSeqObjA\ArbOrdA=\RelSeqAMObjA\ArbOrdA}$.
\end{enumerate}
\end{thm}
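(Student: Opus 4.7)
The plan is to prove all three parts by transfinite induction on $\ArbOrdA$. The base case is immediate, with $\RelSeqObjA{0}=\ArbCarrierA\times\ArbCarrierB=\RelSeqHJObjA{0}=\RelSeqAMObjA{0}$, and every relation in $\RPOrder$ being contained in $\ArbCarrierA\times\ArbCarrierB$. The limit case follows from the commutation of limits with pullbacks: pulling back along the cone into $\ArbTermSeqObjA\ArbLimOrdA=\lim_{\ArbOrdA<\ArbLimOrdA}\ArbTermSeqObjA\ArbOrdA$ yields $\bigcap_{\ArbOrdA<\ArbLimOrdA}\RelSeqObjA\ArbOrdA$, matching the inductive definitions of $\RelSeqHJObjA\ArbLimOrdA$ and $\RelSeqAMObjA\ArbLimOrdA$. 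All the work lies in the successor case.

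For part~(1), I first show by a subsidiary induction that every coalgebra homomorphism $\phi:(\ArbCarrierA,\CoalgA)\to(\ArbCarrierC,\CoalgC)$ satisfies $c_\ArbOrdA\cdot\phi=\ArbFinalMapA\ArbOrdA$, where $c_\ArbOrdA$ is the cone into $\ArbTermSeqObjA\ArbOrdA$ induced by $\CoalgC$; the successor step rewrites $\ArbFinalMapA{\ArbOrdA+1}=\BehA\ArbFinalMapA\ArbOrdA\cdot\CoalgA=\BehA c_\ArbOrdA\cdot\BehA\phi\cdot\CoalgA=\BehA c_\ArbOrdA\cdot\CoalgC\cdot\phi=c_{\ArbOrdA+1}\cdot\phi$ using the homomorphism property. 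A kernel bisimulation is by definition the pullback of a cospan $\ArbCarrierA\to\ArbCarrierC\leftarrow\ArbCarrierB$ of homomorphisms, and since $(\ArbFinalMapA\ArbOrdA,\ArbFinalMapB\ArbOrdA)$ factors through $\ArbCarrierC$, the pullback through $\ArbCarrierC$ factors through $\RelSeqObjA\ArbOrdA$. For an \AMprecong\ $\RelA$ with projections $p,q$, a parallel induction establishes $\ArbFinalMapA\ArbOrdA\cdot p=\ArbFinalMapB\ArbOrdA\cdot q$: in the successor step the IH makes $(\ArbFinalMapA\ArbOrdA,\ArbFinalMapB\ArbOrdA)$ a cocone over $\RelA$, whence the \AMprecong\ condition yields $\BehA\ArbFinalMapA\ArbOrdA\cdot\CoalgA\cdot p=\BehA\ArbFinalMapB\ArbOrdA\cdot\CoalgB\cdot q$, which is exactly $\ArbFinalMapA{\ArbOrdA+1}\cdot p=\ArbFinalMapB{\ArbOrdA+1}\cdot q$, so $\RelA$ factors through $\RelSeqObjA\ArbOrdA$.

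For part~(2), the IH $\RelSeqHJObjA\ArbOrdA=\RelSeqObjA\ArbOrdA$ combined with \factref{fact:coverpbacks} is decisive: since $\BehA$ covers pullbacks and $\RelSeqObjA\ArbOrdA$ is the pullback of $\ArbCarrierA\xrightarrow{\ArbFinalMapA\ArbOrdA}\ArbTermSeqObjA\ArbOrdA\xleftarrow{\ArbFinalMapB\ArbOrdA}\ArbCarrierB$, the relation lifting $\LiftedBehA\RelSeqObjA\ArbOrdA$ is the pullback $\BehA\ArbCarrierA\times_{\BehA\ArbTermSeqObjA\ArbOrdA}\BehA\ArbCarrierB$. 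Pasting this with the defining pullback of $\RelOP$ shows that $\RelSeqHJObjA{\ArbOrdA+1}=\RelOP(\RelSeqObjA\ArbOrdA)$ is the pullback of $\ArbCarrierA\xrightarrow{\BehA\ArbFinalMapA\ArbOrdA\cdot\CoalgA}\BehA\ArbTermSeqObjA\ArbOrdA\xleftarrow{\BehA\ArbFinalMapB\ArbOrdA\cdot\CoalgB}\ArbCarrierB$, which by the inductive definition of $\ArbFinalMapA{\ArbOrdA+1}$ and $\ArbFinalMapB{\ArbOrdA+1}$ is precisely $\RelSeqObjA{\ArbOrdA+1}$.

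For part~(3), let $V_\ArbOrdA$ denote the pushout of $\ArbCarrierA\leftarrow\RelSeqObjA\ArbOrdA\to\ArbCarrierB$ with injections $\iota_\ArbCarrierA,\iota_\ArbCarrierB$, so that $\RelSeqAMObjA{\ArbOrdA+1}=\RelCOP(\RelSeqObjA\ArbOrdA)$ is the pullback of $\ArbCarrierA\xrightarrow{\BehA\iota_\ArbCarrierA\cdot\CoalgA}\BehA V_\ArbOrdA\xleftarrow{\BehA\iota_\ArbCarrierB\cdot\CoalgB}\ArbCarrierB$. Because $\RelSeqObjA\ArbOrdA$ is the pullback of $(\ArbFinalMapA\ArbOrdA,\ArbFinalMapB\ArbOrdA)$, the pushout property provides a unique $v_\ArbOrdA:V_\ArbOrdA\to\ArbTermSeqObjA\ArbOrdA$ with $v_\ArbOrdA\cdot\iota_\ArbCarrierA=\ArbFinalMapA\ArbOrdA$ and $v_\ArbOrdA\cdot\iota_\ArbCarrierB=\ArbFinalMapB\ArbOrdA$; post-composing with $\BehA v_\ArbOrdA$ yields $\RelCOP(\RelSeqObjA\ArbOrdA)\leq\RelSeqObjA{\ArbOrdA+1}$ at once. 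The reverse inclusion is the main obstacle, since $v_\ArbOrdA$ need not be monic (e.g.\ when $\ArbCarrierB$ is empty), precluding a naive pointwise argument. The key step is to invoke \lemmaref{lemma:amcong} with $\RelB=\RelSeqObjA\ArbOrdA$, $\RelBb=\RelSeqObjA{\ArbOrdA+1}$, $\SetV=\ArbTermSeqObjA\ArbOrdA$, $\SetVv=V_\ArbOrdA$, $(f,g)=(\ArbFinalMapA\ArbOrdA,\ArbFinalMapB\ArbOrdA)$, $(f',g')=(\iota_\ArbCarrierA,\iota_\ArbCarrierB)$: diagram~(a) is the defining pullback of $\RelSeqObjA\ArbOrdA$, diagram~(b) commutes because $\RelSeqObjA{\ArbOrdA+1}$ is the pullback of $(\BehA\ArbFinalMapA\ArbOrdA\cdot\CoalgA,\BehA\ArbFinalMapB\ArbOrdA\cdot\CoalgB)$, and diagram~(c) commutes by the pushout property; conclusion~(d) then reads exactly as $\RelSeqObjA{\ArbOrdA+1}\leq\RelCOP(\RelSeqObjA\ArbOrdA)$, completing the induction.
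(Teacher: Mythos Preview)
Your proof is correct and follows essentially the same route as the paper's: transfinite induction, with the limit step by commutation of limits, the successor step of part~(2) by \factref{fact:coverpbacks}, and the successor step of part~(3) by \lemmaref{lemma:amcong}. The paper's proof is a terse sketch that merely names these ingredients; you have correctly unpacked the instantiation of \lemmaref{lemma:amcong} (taking $\RelB=\RelSeqObjA\ArbOrdA$, $\RelBb=\RelSeqObjA{\ArbOrdA+1}$, $\SetV=\ArbTermSeqObjA\ArbOrdA$, $\SetVv$ the pushout) and the easy reverse inclusion via the comparison map $v_\ArbOrdA$, which the paper leaves implicit.
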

\newcommand{\ArbBehImageSeqObjA}[1]{I_{#1}}
\begin{proof}
To see that
$\RelSeqObjA\ArbOrdA$ contains all the kernel bisimulations,
notice that for a cospan of coalgebras,
$(\SetA\rightarrow\SetZ\leftarrow\SetB)$,
the coalgebra $\SetZ$ determines a cone over the terminal sequence,
just as $\SetA$ and $\SetB$ do.
The relation $\RelSeqObjA\ArbOrdA$ contains all the \AMprecong s by
transfinite induction on $\ArbOrdA$: the limit step is vacuous and 
the inductive step uses the definition of \AMprecong.
Statements (2) and (3) are also proved by transfinite induction
on $\ArbOrdA$:
the limit steps use the fact that
limits commute with limits;
the inductive steps 
follow from \factref{fact:coverpbacks},
and from \lemmaref{lemma:amcong}, respectively.
\end{proof}
\subsubsection*{Note}
A consequence of Item~(3) of \thmref{thm:reltermseq} is that,
if the sequence ${(\RelSeqObjA\ArbOrdB\monoto \RelSeqObjA\ArbOrdA)_{\ArbOrdA\leq\ArbOrdB}}$
converges, then the result is the greatest \AMprecong,
provided the endofunctor preserves pullbacks along monos 
and $\CatA$ has pushouts.
In fact, this corollary still holds even if $\CatA$ does not have all pushouts.
This can be proved directly by transfinite induction; the inductive
step uses \lemmaref{lemma:amcong}.

\subsection{Convergence for the relation refinement sequences}
By \thmref{thm:reltermseq}, if 
the terminal sequence converges, then 
the relation refinement sequence does too.
Of course, this is not a sufficient condition. 
Indeed, even when there is a final coalgebra, the relation refinement sequence
may converge before the terminal sequence:
\begin{prop}
  If, for some ordinal $\ArbOrdA$, the morphism
  ${\ArbTermSeqMapA{\ArbOrdA+1}\ArbOrdA:\ArbTermSeqObjA{\ArbOrdA+1}
  \to\ArbTermSeqObjA\ArbOrdA}$ is monic, 
  then $\RelSeqObjA{\ArbOrdA}=\RelSeqObjA{\ArbOrdA+1}$.
\end{prop}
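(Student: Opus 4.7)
The plan is to reduce the statement to the elementary fact that pullbacks along monics are cancellable. The key input we need from the setup is that the cone $(\ArbFinalMapA\ArbOrdA)_\ArbOrdA$ defined in Section~5.3 commutes with the terminal cochain: for all $\ArbOrdA\leq\ArbOrdB$ we have
\[
\ArbFinalMapA\ArbOrdA = \ArbTermSeqMapA\ArbOrdB\ArbOrdA \compose \ArbFinalMapA\ArbOrdB,
\qquad
\ArbFinalMapB\ArbOrdA = \ArbTermSeqMapA\ArbOrdB\ArbOrdA \compose \ArbFinalMapB\ArbOrdB.
\]
This is immediate by transfinite induction from the definition of the cones (it is the very condition making $(\ArbFinalMapA\ArbOrdA)_\ArbOrdA$ a cone over the terminal cochain), so I would record it as a preliminary observation rather than expand it.

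Next I would apply the general lemma: if $m\colon Z' \monoto Z$ is monic and $f\colon X\to Z'$, $g\colon Y\to Z'$ are arbitrary, then the pullback of $(mf,mg)$ coincides with the pullback of $(f,g)$ as a subobject of $X\times Y$. Indeed, since $m$ is monic, for any $T$ and $a\colon T\to X$, $b\colon T\to Y$ we have $mfa=mgb$ iff $fa=gb$. Instantiating with $m = \ArbTermSeqMapA{\ArbOrdA+1}\ArbOrdA$, $f=\ArbFinalMapA{\ArbOrdA+1}$, $g=\ArbFinalMapB{\ArbOrdA+1}$, the preliminary observation gives $mf = \ArbFinalMapA\ArbOrdA$ and $mg = \ArbFinalMapB\ArbOrdA$. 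Hence the pullback of $(\ArbFinalMapA\ArbOrdA,\ArbFinalMapB\ArbOrdA)$ (which is $\RelSeqObjA\ArbOrdA$) agrees with the pullback of $(\ArbFinalMapA{\ArbOrdA+1},\ArbFinalMapB{\ArbOrdA+1})$ (which is $\RelSeqObjA{\ArbOrdA+1}$).

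The only thing worth being careful about is framing the identification as an equality of subobjects of $X\times Y$ rather than merely an isomorphism, but this follows from the universal property of the pullback once one checks that the canonical comparison map $\RelSeqObjA{\ArbOrdA+1}\to \RelSeqObjA\ArbOrdA$ (which exists from the cochain $(\RelSeqObjA\ArbOrdB \monoto \RelSeqObjA\ArbOrdA)_{\ArbOrdA\leq\ArbOrdB}$, giving $\RelSeqObjA{\ArbOrdA+1}\pleq \RelSeqObjA\ArbOrdA$) is iso. Thus there is no real obstacle: the result is purely a pullback cancellation argument, with no use of properties of $\BehA$ at all. The whole proof should fit in a few lines.
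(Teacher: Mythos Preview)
Your proposal is correct and essentially the same as the paper's proof. The paper presents the argument diagrammatically---rewriting $\RelSeqObjA{\ArbOrdA+1}$ as a pullback along the diagonal $\ArbTermSeqObjA{\ArbOrdA+1}\to\ArbTermSeqObjA{\ArbOrdA+1}\times\ArbTermSeqObjA{\ArbOrdA+1}$, observing that the square with $\ArbTermSeqMapA{\ArbOrdA+1}\ArbOrdA$ and the two diagonals is a pullback exactly when $\ArbTermSeqMapA{\ArbOrdA+1}\ArbOrdA$ is monic, and then pasting---but this is precisely your pullback-cancellation lemma together with the cone identity $\ArbFinalMapA\ArbOrdA=\ArbTermSeqMapA{\ArbOrdA+1}\ArbOrdA\compose\ArbFinalMapA{\ArbOrdA+1}$.
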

\begin{proof}
  We have the following situation.
  \[
  \xymatrix@C+1.3cm{
    \RelSeqObjA{\ArbOrdA+1}\ar[r]\ar[d]\pb{-45}
    &\ArbTermSeqObjA{\ArbOrdA+1}\ar[d]^\Delta\ar[r]^{\ArbTermSeqMapA{\ArbOrdA+1}\ArbOrdA}
    &\ArbTermSeqObjA{\ArbOrdA}\ar[d]^\Delta
    \\
    \SetA\times\SetB
    \ar[r]_(.45){\ArbFinalMapA{\ArbOrdA+1}\times\ArbFinalMapB{\ArbOrdA+1}}
    &\ArbTermSeqObjA{\ArbOrdA+1}\times\ArbTermSeqObjA{\ArbOrdA+1}
    \ar[r]_(.55){\ArbTermSeqMapA{\ArbOrdA+1}\ArbOrdA\times\ArbTermSeqMapA{\ArbOrdA+1}\ArbOrdA}
    &
    \ArbTermSeqObjA{\ArbOrdA}\times\ArbTermSeqObjA{\ArbOrdA}
  }
  \]
  The left-hand square is a pullback --- this is a rearrangement of 
  the definition of $\RelSeqObjA{\ArbOrdA+1}$.
  The right-hand square is a pullback 
  if and only if $\ArbTermSeqMapA{\ArbOrdA+1}\ArbOrdA$ is monic.
  Thus the outer square is a pullback.
  Now 
  $\ArbFinalMapA{\ArbOrdA}=
  (\ArbFinalMapA{\ArbOrdA+1}\cdot \ArbTermSeqMapA{\ArbOrdA+1}{\ArbOrdA})$
  and 
  $\ArbFinalMapB{\ArbOrdA}=
  (\ArbFinalMapB{\ArbOrdA+1}\cdot \ArbTermSeqMapA{\ArbOrdA+1}{\ArbOrdA})$,
  which means that 
  $\RelSeqObjA{\ArbOrdA+1}=\RelSeqObjA\ArbOrdA$.
\end{proof}
If $\CatA$ is $\Set$ and $\BehA$ preserves filtered colimits, then the
terminal sequence does not converge until~$(\omega+\omega)$, but it
becomes monic at $\omega$ \cite{w-finalseq}.  As is well-known, the
relation refinement sequence for image-finite transition systems
converges at~$\omega$.

For the case when ${\CatA=\Set}$ and
sets~%
${\ArbCarrierA}$ and~${\ArbCarrierB}$ are both finite,
the relation refinement sequence
will converge before $\omega$
because the skeleton of 
$\RPOrder$ is finite.
This is relevant in a 
slightly more general setting:
In a Boolean Grothendieck topos,
every descending \gcochain\omega\ of subobjects
from a finitely presentable object is eventually constant.
\hide{
\newcommand{\ArbObjSeqObjA}[1]{S_{#1}}
\newcommand{\ArbObjSeqMapA}[2]{m_{#1,#2}}
\newcommand{\ArbObjSeqObjEGA}[2]{S}
\newcommand{\ArbObjSeqMapEGA}[2]{m}
  \newcommand{\NatA}{i}
  \newcommand{\NatB}{j}
\begin{thm}
\label{thm:relseqterm}
Let $\ArbLimOrdA$ be a limit ordinal 
for which $\card(\ArbLimOrdA)$ is regular.
Let $\CatA$ be a Boolean topos
with colimits of bounded \gchain\ArbLimOrdA s of monos.
Consider a \gcochain\ArbLimOrdA\ 
$(\ArbObjSeqMapA\ArbOrdB\ArbOrdA:
\ArbObjSeqObjA\ArbOrdB\monoto\ArbObjSeqObjA\ArbOrdA)_
{\ArbOrdA\leq\ArbOrdB<\ArbLimOrdA}$
of monos
such that~%
$\ArbObjSeqObjA0$ is \gpresentable{(\card{(\ArbLimOrdA)})}.
Then there is an ordinal $\ArbOrdA<\ArbLimOrdA$
for which~%
${\ArbObjSeqMapA{\ArbOrdA+1}\ArbOrdA:
\ArbObjSeqObjA{\ArbOrdA+1}\to\ArbObjSeqObjA{\ArbOrdA}}$ 
is an isomorphism.
\qed
\end{thm}}

\mbox{}

The presheaf topos $\PshfI$, used to model name-passing,
is not Boolean. It does, however, have 
$\ShfI$ as
a Boolean subcategory,
and this is perhaps a more appropriate universe
for name-passing calculi~%
(see e.g.~\cite{fs-namepassing}).
There, the finitely presentable objects are exactly those objects
that can be described by finite `named-sets with symmetry'\cite{fs-namepassing,gmm-sheafnamedsets}.
Thus the techniques of this article provide a general 
foundation for the coalgebra-inspired verification procedures of 
Ferrari, Montanari and Pistore~\cite{fmp-hda}.

\appendix
\section{Some concepts from categorical logic}

\noindent We recall some concepts from categorical logic: regular categories;
and powersets from algebraic set theory.
\subsection{Regular and extensive categories}
An \emph{image} of a morphism
$f:A\to C$ is a monomorphism
$m:B\monoto C$ 
through which $f$ factors, which is minimal in the sense
that, if $f$ factors through any other mono, $B'\monoto C$,
then~$B$ is a subobject of~$B'$.
In this setting, 
the factoring morphism $f:A\coverto B$ is called a \emph{cover};
its image is $B\xrightarrow{\mathrm{id}}B$.
A category \emph{has images} if every morphism has an image.

In a category with finite limits, covers are epimorphisms. They serve
as a generalization of `surjective function'. In this setting, 
other authors refer to covers as \emph{strong} epimorphisms.

A category with finite limits and images 
is said to be \emph{regular} if covers are stable under pullback,
i.e., if the following diagram is a pullback,
and if $f$ is a cover, then so is $f'$.
\begin{equation}
\label{dgm:pback}
\xymatrix{
A' \ar[r]^{g'}
\ar[d]_{f'}\pbb{-45}
&
A\ar[d]^{f}
\\
B'\ar[r]_{g}
&*+<8pt>{B}}
\end{equation}

Recall that a category is said to be \emph{extensive} if 
it has coproducts and they are disjoint and stable under 
pullback~(see e.g.~\cite{clw-extensive}).

\subsection{Open maps, powersets, and algebraic set theory}
We now recall an analysis of `smallness'
due to Joyal and Moerdijk~\cite{jm-open,jm-ast}.
(For a more recent introduction to 
this area of research, see the articles by Awodey~\cite{a-ast} and by 
van den Berg and Moerdijk~\cite{bm-introast}.)

An intuition for this analysis is that a morphism
$f\colon A\to B$ describes a \gindexed B\ family of classes --- 
informally, for each element $b$ of $B$, we have a class~${f\inv(b)}$. 
When we say that a morphism $f\colon A\to B$ is small,
an intuition is that each fibre ${f\inv(b)}$ is small.
In particular, we say that an object $A$ is small if the 
terminal map $A\to 1$ is small.

\subsubsection*{Open maps in regular categories.}
Let $\CatA$ be a regular category. 
A class $\SmallS$ of morphisms in~$\CatA$ is a class of \emph{open maps} 
if it satisfies the following four axioms. (This numbering follows
\cite{jm-open}.)
\begin{enumerate}[(1)]
\item[(A1)] $\SmallS$ is closed under composition, and all identity morphisms are in 
  $\SmallS$.
\item[(A2)] $\SmallS$ is stable under pullback, i.e., in \dgmref{dgm:pback},
  if $f$ is in $\SmallS$, then $f'$ is also in $\SmallS$.
\item[(A3)] (`Descent') In \dgmref{dgm:pback},
  if $f'$ is in $\SmallS$ and $g$ is a cover, then $f$ is also in~$\SmallS$.
\item[(A6)] In the following triangle, if $f$ is in $\SmallS$ and 
  $e$ is a cover, then~$g$ is also in~$\SmallS$.
  \[
  \xymatrix{
    A\ar@{-|>}[r]^e\ar[dr]_f&A'\ar[d]^g\\
    &B}
  \]
\end{enumerate}
Most authors assume that $\CatA$ has additional structure
so that the universal quantifier can be interpreted
in $\CatA$. We do not need that in this article.

\subsubsection*{Sums and open maps.}
In an extensive regular category, 
it is appropriate to assume the following additional axioms.
\begin{enumerate}[(1)]
\item[(A4)] The maps $0\to 1$ and $1+1\to 1$ are in $\SmallS$.
\item[(A5)] If $A\to A'$ and $B\to B'$ are in $\SmallS$,
then so is $(A+B)\to(A'+B')$.
\end{enumerate}

\subsubsection*{Powersets}
Given a regular category $\CatA$ and a class of open maps $\SmallS$,
an \srelation\SmallS\ is a jointly monic span $(I\leftarrow R \to A)$ 
whose left projection is in $\SmallS$. 

An \spowerset\SmallS\ for an object $A$ of $\CatA$ is an object
$\SPow\SmallS(A)$ together with an \srelation\SmallS\ 
${(\SPow\SmallS(A)\leftarrow{\ni_A}\to A)}$ 
such that for every 
\srelation\SmallS\ ${(I\leftarrow R \to A)}$ there is a unique 
morphism
$I\to \SPow\SmallS(A)$ making 
$(R\monoto I\times A)$ a pullback of $({\ni_A}\monoto \SPow\SmallS(A)\times A)$:
\[
\xymatrix{
R
\ar@{ >->}[d]\ar[r]&{\ni_A}
\ar@{ >->}[d]
\\
I \times A \ar[r]& \SPow\SmallS (A)\times A}
\]
It follows from axiom~(A2) that morphisms $I\to\SPow\SmallS (A)$ are in 
bijective correspondence with 
\srelation\SmallS s $(I\leftarrow R\to A)$.

If every object of $\CatA$ has an \spowerset\SmallS,
then the construction $\SPow\SmallS$ extends straightforwardly
to a covariant endofunctor on $\CatA$, as follows.
For any morphism $f:A\to B$,
the action ${\SPow\SmallS(f):\SPow\SmallS(A)\to\SPow\SmallS(B)}$ corresponds 
to the 
\srelation\SmallS\ in the image of the span
$(\SPow\SmallS(A)\leftarrow {\ni_A}\to A \to B)$,
using axiom~(A6).

\subsubsection*{Separation and collection}
There are various axioms and axiom schema that can be 
assumed as principles for defining sets in a constructive setting. The 
notes by Aczel and Rathjen~\cite{ar-cst} provide an overview.

The separation axiom (also known as bounded comprehension)
amounts to the following axiom on $\SmallS$.
\begin{enumerate}[(M)]
\item All monomorphisms in $\CatA$ are in $\SmallS$.
\end{enumerate}
The (strong) collection axiom has the following categorical counterpart,
when there is an \spowerset\SmallS:
\begin{enumerate}[(A9)]
\item The endofunctor 
  $\SPow\SmallS$ preserves covers.
\end{enumerate}

\subsubsection*{Further axioms.}
The axioms above are all that we need in this article.
As a foundation of mathematics, these axioms are too weak:
one would typically also require 
that there is a small natural numbers object; that
each powerset $\SPow\SmallS(X)$ is small;
and that there is a class of all sets, a universal small map.




\newcommand{\BIBprocof}[1]{Proceedings of #1}
\newcommand{\BIBtphols}[3]{\BIBprocof{the \OrdinalToName{#3} International Conference on Theorem Proving in Higher Order Logics (TPHOLs'#1)}}
\newcommand{\BIBsos}[3]{\BIBprocof{the \OrdinalToName{#3} Workshop on Structural Operational Semantics (SOS'#1)}}
\newcommand{\BIBconcur}[3]{\BIBprocof{CONCUR'#1}}
\newcommand{\BIBcalco}[3]{\BIBprocof{CALCO'#1}}
\newcommand{\BIBicalp}[3]{\BIBprocof{the \OrdinalToName{#3} International Colloquium on Automata, Languages and Programming (ICALP'#1)}}
\newcommand{\BIBlics}[3]{\BIBprocof{LICS'#1}}
\newcommand{\BIBcmcs}[3]{\BIBprocof{CMCS'#1}}
\newcommand{\BIBctcs}[3]{\BIBprocof{CTCS'#1}}
\newcommand{\BIBmfps}[3]{\BIBprocof{MFPS #3}}
\newcommand{\BIBpopl}[3]{\BIBprocof{the \OrdinalToName{#3} Annual ACM SIGPLAN -- SIGACT
 Symposium on Principles of Programming Languages 
(POPL'#1)}}
\newcommand{\BIBfossacs}[3]{\BIBprocof{FOSSACS'#1}}
\newcommand{\BIBieee}{IEEE Computer Society Press}
\newcommand{\BIBextabs}[1]{Extended abstract appeared in \emph{#1}}
\newcommand{\BIBprelim}[1]{Preliminary version appeared in \emph{#1}}
\newcommand{\abbrvstop}{.}

\bibliographystyle{abbrv}
\bibliography{refs.bib}


\end{document}